\documentclass[10pt,a4paper]{article}
\usepackage[utf8]{inputenc}
\usepackage[english]{babel}
\usepackage{amsmath}
\usepackage{amsfonts}
\usepackage{amssymb}
\usepackage{amsthm}
\usepackage{cite}
\usepackage{graphicx}
\usepackage[left=2cm,right=2cm,top=2cm,bottom=2cm]{geometry}
\usepackage{hyperref}
\usepackage{comment}
\usepackage{authblk}

\hypersetup{
    colorlinks=true,
    linkcolor=blue,
    filecolor=magenta,      
    urlcolor=cyan,
    pdfpagemode=FullScreen,
}

\newtheorem{theorem}{Theorem}

\newtheorem{lemma}{Lemma}

\newtheorem{definition}{Definition}
\newtheorem{example}{Example}
\newtheorem{remark}{Remark}

\providecommand{\keywords}[1]{%
\small\textbf{\textit{Keywords:}} #1%
}

\title{Hamiltonian reduction from particular integrals}
\author[1]{R. Azuaje}
\affil[1]{Department of Physics, Faculty of Nuclear Sciences and Physical Engineering, Czech Technical University in Prague. Břehová 7, 115 19 Praha 1, Czech Republic}
\author[2]{A. M. Escobar-Ruiz}
\affil[2]{Departamento de F\'{i}sica, Universidad Aut\'onoma Metropolitana Unidad Iztapalapa\\
San Rafael Atlixco 186, 09340, Ciudad de M\'exico, M\'exico}
\author[3]{I. Gutierrez-Sagredo}
\affil[3]{Departamento de Matem\'aticas y Computaci\'on, Universidad de Burgos, 
09001 Burgos, Spain}

\date{}

\begin{document}

\maketitle

\begin{abstract}
We develop a geometric reduction mechanism generated by systems of particular integrals, namely, families of functions whose time derivatives close linearly on the family. Their common zero set is dynamically invariant. In the Hamiltonian case, under a weak involution condition, the restricted dynamics is presymplectic, and its characteristic quotient carries a reduced Hamiltonian flow. This yields a direct bridge between particular integrals, presymplectic reduction, and lower-dimensional Hamiltonian dynamics, and leads to a Liouville-type notion of particular integrability. We illustrate the framework through mechanical examples and lift constructions, including variants of the Eisenhart lift.

\end{abstract}

\keywords{particular integrals; Hamiltonian reduction; presymplectic reduction; Liouville integrability; Eisenhart lift}

\tableofcontents

\section{Introduction}

Conservation laws are one of the central organizing principles of Hamiltonian mechanics. In the classical Liouville--Arnold framework, sufficiently many independent first integrals in involution lead to invariant tori, action-angle variables, and integration by quadratures \cite{Arnold78,AKN2006,BBT2003}. Geometric mechanics has extended this picture through symplectic and Poisson reduction, bi-Hamiltonian structures, and modern approaches to integrability \cite{AMRC2008,LPV2013,MM84,KM96,TT2022}. Superintegrability and polynomial algebras of constants of motion provide another important source of explicitly solvable mechanical systems \cite{miller2013classical,tempesta2004superintegrability,marquette2010superintegrability}.

The present work concerns a weaker but geometrically robust form of conservation. A quantity may fail to be globally conserved and yet become conserved after the dynamics is restricted to a suitable invariant subset of phase space. Such quantities are known as particular integrals. They were introduced in the context of particular integrability and quasi-exact solvability in \cite{Turbiner2013}, and later developed for classical Hamiltonian systems in \cite{escobar2024particular}. Extensions to cosymplectic, contact, and cocontact Hamiltonian systems were studied in \cite{azuaje2025particular}. A complementary viewpoint is provided by Lie integrability by quadratures, where restricted equations can be solved through solvable Lie algebras of symmetries \cite{CFGR2015,azuaje2024lie}.

The main point of this paper is that particular integrals naturally generate a Hamiltonian reduction mechanism. We first generalize the usual scalar condition
\[
\dot f=af
\]
to a system of functions
\[
\dot f_i=a_i^j f_j.
\]
The common zero set of such a family is dynamically invariant. In the Hamiltonian case, if the functions are in particular involution (a weak involution condition), the invariant submanifold inherits a closed two-form of constant rank. The restricted dynamics is therefore presymplectic, and quotienting the characteristic distribution yields a genuine Hamiltonian system with fewer degrees of freedom. This connects particular integrability with presymplectic reduction and constrained Hamiltonian dynamics \cite{carinena1985canonical,grabowska2023reductions,libermann2012symplectic}.

This viewpoint clarifies the distinction between different notions of integrability on restricted dynamics. The particular integrability studied in \cite{escobar2024particular} is naturally formulated in terms of Lie integrability by quadratures: the restricted equations need not be Hamiltonian. By contrast, when the restricted dynamics admits a presymplectic quotient, Liouville integrability becomes meaningful on the reduced symplectic phase space. This motivates the notion of \emph{particular Liouville integrability}: a Hamiltonian system is particularly Liouville integrable when the dynamics restricted by particular integrals and projected along the characteristic distribution becomes a completely Liouville integrable Hamiltonian system. This notion is related to weak and restricted forms of integrability, where complete integrability holds only on selected energy or momentum levels \cite{rosquist1995invariants,pucacco2004non}.

The construction also fits naturally within the broader geometric study of symmetries and conservation laws. Noether-type correspondences, canonoid transformations, and scaling symmetries show that conservation or reduction can arise from structures weaker than canonical symmetries \cite{KS2011,CR88,RS2015,azuaje2024scaling}. In this sense, particular integrals provide another mechanism by which non-global conservation data can organize Hamiltonian dynamics.

A second goal of the paper is to show that this mechanism appears naturally in lift constructions. The Eisenhart lift realizes natural Hamiltonian trajectories as projections of geodesic dynamics on an extended configuration space. More generally, auxiliary lifts can turn first integrals of an original Hamiltonian system into systems of particular integrals of a lifted system. The original dynamics is recovered by restricting to the invariant subset defined by the auxiliary momenta and projecting away the auxiliary variables. This perspective connects particular integrals with geometric lifts, mechanical Hamiltonian systems, and polynomial invariants of natural systems \cite{godbillon1969geometrie,maciejewski2004darboux,hietarinta1987direct}.

The paper is organized as follows. Section 2 introduces systems of particular integrals and proves that their common zero level sets are dynamically invariant. Section 3 specializes in Hamiltonian systems and shows that involutive systems of particular integrals define invariant presymplectic submanifolds whose characteristic quotients carry reduced Hamiltonian dynamics; this leads to the definition of particular Liouville integrability. Section 4 applies the construction to lifts of natural Hamiltonian systems, including scalar, diagonal, and non-diagonal auxiliary lifts, as well as systems with electromagnetic coupling. We close with a short outlook on polynomial Hamiltonians.

\section{Particular integrals and invariant submanifolds}

The notion of particular integral was introduced in \cite{Turbiner2013}, namely, \textbf{a particular integral for a Hamiltonian system is a smooth real function on phase space that is conserved possibly only for certain trajectories of the system}. Of course, every constant of motion is a particular integral. The notion of particular integral has been extensively studied in relation to the reduction of the equations of motion and with the notion of particular integrability \cite{escobar2024particular,azuaje2025particular}; namely, the integrability by quadratures of the Hamilton equations of motion on a part of the phase space (on a dynamically invariant submanifold determined by independent particular integrals). The concept of particular superintegrability, as an extension of the notion of superintegrability, has also received systematic attention \cite{turbiner2020particular,turbiner2021superintegrability,escobar2026nonlinear}.

Let us see a relevant example.
\begin{example}
\label{excfp}
Let us consider the classical central-force problem in $\mathbb{R}^3$.
\begin{equation}
\label{hc}
    {\rm H}(x,y,z,p_{x},p_{y},p_{z}) \ = \ \frac{1}{2\,m}\,\big(\,p_x^2 \ + \ p_y^2 \ + \ p_z^2 \,\big)  \ + \ V(r).
\end{equation}
Making a canonical transformation to the non-orthogonal coordinate system $(r,\,\phi,\,\rho)=(Q^1,Q^2,Q^3)$
\[
r= \sqrt{x^2+y^2+z^2} \quad , \qquad  \phi= \tan^{-1}\bigg(\frac{y}{x}\bigg) \quad , \qquad \rho= \sqrt{x^2+y^2} \ ,
\]
the Hamiltonian (\ref{hc}) becomes
\begin{equation*}
\label{hc2}
  K \ \equiv \ {\rm H}(Q,P) \ = \ \frac{1}{2\,m}\,\bigg(\,p_{\rho}^2  \ + \ p_{r}^2  \ + \ \frac{2\,\rho}{r}\, p_{\rho}\, p_{r} \ + \ \frac{1}{\rho^2}\, p_{\phi}^2\,\bigg) \ + \ V(r).
\end{equation*}
We find that $p_{\phi}$ is a constant of motion and  $p_\rho$ is conserved only on a dynamically invariant submanifold, namely, that defined by the level subset $p_{\phi}=0,p_{\rho}=0$, indeed
\[
\lbrace p_{\rho},\,K\rbrace \ = \ - \bigg(\frac{1}{m\,r}\,p_r\bigg)\,p_\rho \ + \ \bigg(\frac{1}{m\,\rho^3}\,p_{\phi}\,\bigg)\,p_{\phi}.
\]
\end{example}

It is well known that the existence of a constant of motion for a classical mechanical system enables us to restrict the dynamics to a dynamically invariant submanifold. Namely, if $f$ is a constant of motion of a classical mechanical system, then we have that for each $\alpha\in\mathbb{R}$ the level set 
\[ 
M^{(\alpha)}_{f} \ = \ \lbrace\, x: \ f(x)\,=\,\alpha \, \rbrace \ ,
\]
defines a (smooth) submanifold of the phase space of codimension $1$ for regular values of $f$ \cite{Lee2012}. Moreover, it is invariant under the dynamics. In the particular case of the previous example, we have that the level subset defined by $p_{\phi}=0$ is a dynamically invariant submanifold of the given Hamiltonian system; in such restricted dynamics, the evolution of $p_{\rho}$ is determined by
\begin{equation*}
\dot{p}_{\rho}=- \bigg(\frac{1}{m\,r}\,p_r\bigg)\,p_\rho .
\end{equation*}
Following this pattern, in \cite{escobar2024particular,azuaje2025particular} a particular integral is defined as a smooth function $f$ on the phase space such that $\dot{f}=a\,f$ for some smooth function $a$ on the phase space. This definition does not include more general cases, as shown in the following example.

\begin{example}
\label{exfourbody}
 Consider a classical system of four particles, moving in the Euclidean plane $\mathbb{R}^{2}$, with equal masses ($m_1 = m_2 = m_3 = m_4 = 1$) and subjected to a quadratic pairwise interaction potential; the Hamiltonian function is \cite{escobar2025four}
\begin{equation*}
H(\textbf{r}_{1},\textbf{r}_{2},\textbf{r}_{3},\textbf{r}_{4},\textbf{p}_{1},\textbf{p}_{2},\textbf{p}_{3},\textbf{p}_{4})=\frac{1}{2}\left(\textbf{p}_{1}^{2}+\textbf{p}_{2}^{2}+\textbf{p}_{3}^{2}+\textbf{p}_{4}^{2}\right)+\frac{1}{2}w^{2}\left(r_{12}^{2}+r_{23}^{2}+r_{34}^{2}+r_{14}^{2}-\frac{1}{2}\left(r_{13}^{2}+r_{24}^{2}\right)\right),
\end{equation*}
with $\textbf{r}_{ij}=\textbf{r}_{i}-\textbf{r}_{j}$ and $r_{ij}=\vert \textbf{r}_{ij}\vert$.
Now we introduce (vector) Jacobi coordinates $(\textbf{S}_{1},\textbf{S}_{2},\textbf{S}_{3},\textbf{S}_{4})$ on the configuration space, namely
\begin{equation*}
\left\lbrace \begin{array}{c}
\textbf{S}_{1}=\frac{1}{2}\left(\textbf{r}_{2}-\textbf{r}_{1}\right)\\
\textbf{S}_{2}=\sqrt{\frac{2}{3}}\left(\textbf{r}_{3}-\frac{1}{2}\left(\textbf{r}_{1}+\textbf{r}_{2}\right)\right)\\
\textbf{S}_{3}=\sqrt{\frac{3}{4}}\left(\textbf{r}_{4}-\frac{1}{3}\left(\textbf{r}_{1}+\textbf{r}_{2}+\textbf{r}_{3}\right)\right)\\
\textbf{S}_{4}=\frac{1}{4}\left(\textbf{r}_{1}+\textbf{r}_{2}+\textbf{r}_{3}+\textbf{r}_{4}\right);
\end{array} \right.;
\end{equation*}
then the conjugate momenta $\textbf{P}_{1},\textbf{P}_{2},\textbf{P}_{3},\textbf{P}_{4}$ are
\begin{equation*}
\left\lbrace \begin{array}{c}
\textbf{P}_{1}=\frac{1}{2}\left(\textbf{p}_{2}-\textbf{p}_{1}\right)\\
\textbf{P}_{2}=\sqrt{\frac{2}{3}}\left(\textbf{p}_{3}-\frac{1}{2}\left(\textbf{p}_{1}+\textbf{p}_{2}\right)\right)\\
\textbf{P}_{3}=\sqrt{\frac{3}{4}}\left(\textbf{p}_{4}-\frac{1}{3}\left(\textbf{p}_{1}+\textbf{p}_{2}+\textbf{p}_{3}\right)\right)\\
\textbf{P}_{4}=\textbf{p}_{1}+\textbf{p}_{2}+\textbf{p}_{3}+\textbf{p}_{4}
\end{array} \right. \\.
\end{equation*}

We have
\begin{equation*}
\begin{split}
H(\textbf{S}_{1},\textbf{S}_{2},\textbf{S}_{3},\textbf{S}_{4},\textbf{P}_{1},\textbf{P}_{2},\textbf{P}_{3},\textbf{P}_{4})&=\frac{1}{2}\left(\textbf{P}_{1}^{2}+\textbf{P}_{2}^{2}+\textbf{P}_{3}^{2}\right)+\frac{1}{8}\textbf{P}_{4}^{2}\\
&\quad+\frac{1}{4}w^{2}\left(5\textbf{S}_{1}^{2}+3\textbf{S}_{2}^{2}+4\textbf{S}_{3}^{2}-2\sqrt{3}\textbf{S}_{1}\textbf{S}_{2}+2\sqrt{6}\textbf{S}_{1}\textbf{S}_{3}-2\sqrt{2}\textbf{S}_{2}\textbf{S}_{3}\right).
\end{split}
\end{equation*}
Observe that $\textbf{S}_{4}$ is a cyclic coordinate, so $\textbf{P}_{4}$ is a (vector) constant of motion; in addition 
\begin{equation*}
\dot{S}_{4}=\frac{1}{4}\textbf{P}_{4},
\end{equation*}
so when $\textbf{P}_{4}=0$ (the level set of the phase space determined by $\textbf{P}_{4}=0$ is a dynamically invariant submanifold) we have that $\textbf{S}_{4}$ is a conserved quantity, i.e., $\textbf{S}_{4}$ is a (vector) particular integral (formally speaking, the components of $\textbf{S}_{4}$ are particular integrals).

We consider the dynamics restricted to $\textbf{P}_{4}=0$ (therefore $\textbf{S}_{4}$ is a real vector constant under each trajectory), the reduced phase space $M$ of dimension $12$ has vector canonical coordinates $(\textbf{S}_{1},\textbf{S}_{2},\textbf{S}_{3},\textbf{P}_{1},\textbf{P}_{2},\textbf{P}_{3})$  and the reduced Hamiltonian is
\begin{equation*}
H|_{\textbf{P}_{4}=0}(\textbf{S}_{1},\textbf{S}_{2},\textbf{S}_{3},\textbf{P}_{1},\textbf{P}_{2},\textbf{P}_{3})=\frac{1}{2}\left(\textbf{P}_{1}^{2}+\textbf{P}_{2}^{2}+\textbf{P}_{3}^{2}\right)+\frac{1}{4}w^{2}\left(5\textbf{S}_{1}^{2}+3\textbf{S}_{2}^{2}+4\textbf{S}_{3}^{2}-2\sqrt{3}\textbf{S}_{1}\textbf{S}_{2}+2\sqrt{6}\textbf{S}_{1}\textbf{S}_{3}-2\sqrt{2}\textbf{S}_{2}\textbf{S}_{3}\right).
\end{equation*}
$H|_{\textbf{p}_{4}=0}$ coincides with the Hamiltonian function for the relative motion to the center of mass, namely $H_{rel}=H|_{\textbf{P}_{4}=0,\textbf{S}_{4}=0}$ (the restrictions $\textbf{P}_{4}=0,\textbf{S}_{4}=0$ define the center of mass frame), so we shall denote $H|_{\textbf{p}_{4}=0}=H_{rel}$, however, the dynamics is not restricted to $\textbf{S}_{4}=0$.

Now, let us consider the scalar quantities $rp_{13}=\textbf{r}_{13}\cdot \textbf{p}_{13}$ and $l_{13}=\textbf{p}_ {13}^{2}-w^{2}\textbf{r}_{13}^{2}$,
\begin{equation*}
rp_{13}=\frac{1}{2}(\textbf{P}_{1}+\sqrt{3}\textbf{P}_{2})(\textbf{S}_{1}+\sqrt{3}\textbf{S}_{2})
\end{equation*}
and
\begin{equation*}
l_{13}=\frac{1}{2}\left(\textbf{P}_{1}^2+2\sqrt{3}\textbf{P}_{1}\textbf{P}_{2}+3\textbf{P}_{2}^2-(\textbf{S}_{1}^2+2\sqrt{3}\textbf{S}_{1}\textbf{S}_{2}+3\textbf{S}_{2}^2)w^2\right).
\end{equation*}
We have
\begin{equation*}
\dot{rp}_{13}=l_{13}\quad\textit{and}\quad\dot{l}_{13}=-4w^{2}rp_{13},
\end{equation*}
so we have that $rp_{13}$ and $l_{13}$ are particular integrals of the reduced Hamiltonian (therefore, of the original Hamiltonian); indeed, by taking simultaneously $rp_{13}=0$ and $l_{13}=0$, we have that $rp_{13}$ and $l_{13}$ are constant on each trajectory. 
\end{example}

In the previous example, the functions $S_4$, $rp_{13}$ and $l_{13}$ do not satisfy the scalar condition $\dot f=af$. Rather, they illustrate a more general mechanism: invariant submanifolds may arise as common zero sets of several functions whose time derivatives close linearly on the same set of functions. The following lemma makes this observation precise and generalizes Lemma 1 of \cite{escobar2024particular,azuaje2025particular}.

\begin{lemma}
\label{lemma1}
Let $X$ be the dynamical vector field of a classical mechanical system on a smooth phase space $M$. Let $f_1,\ldots,f_k\in C^\infty(M)$ satisfy
\begin{equation}
\label{eqPI}
X(f_i)=a_i^j f_j,\qquad i=1,\ldots,k,
\end{equation}
where summation over the repeated index $j$ is understood and $a_i^j\in C^\infty(M)$. Suppose that $f_1,\ldots,f_k$ are functionally independent on
\begin{equation*}
\label{Mf}
M_f=\{x\in M:\ f_1(x)=0,\ldots,f_k(x)=0\}.
\end{equation*}
Then $M_f$ is a dynamically invariant embedded submanifold of $M$ of codimension $k$.
\end{lemma}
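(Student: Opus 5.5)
The proof has two independent parts: the submanifold structure, which comes from the regular level set theorem, and the invariance, which is a uniqueness argument for a linear ODE along integral curves.

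\emph{Submanifold structure.} Interpret ``functionally independent on $M_f$'' as: $df_1\wedge\cdots\wedge df_k\neq 0$ at every point of $M_f$. Equivalently, $0\in\mathbb{R}^k$ is a regular value of $F=(f_1,\ldots,f_k):M\to\mathbb{R}^k$ along $F^{-1}(0)=M_f$. The regular level set theorem \cite{Lee2012} then gives directly that $M_f$ is a closed embedded submanifold of codimension $k$.

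\emph{Invariance.} Take $x_0\in M_f$ and let $\gamma:I\to M$ be the maximal integral curve of $X$ with $\gamma(0)=x_0$. Set $u_i(t)=f_i(\gamma(t))$. By (\ref{eqPI}),
\[
\dot u_i(t)=X(f_i)(\gamma(t))=A_i^j(t)\,u_j(t),\qquad A_i^j(t)=a_i^j(\gamma(t)).
\]
This is a linear system of ODEs for $u=(u_1,\ldots,u_k)$ with smooth coefficients on $I$, and $u(0)=0$. The zero function solves the same initial value problem, so uniqueness forces $u\equiv 0$ on $I$.

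To avoid quoting uniqueness, one can use a Grönwall estimate instead. On any compact subinterval $J\ni 0$ of $I$, the function $\phi=|u|^2$ satisfies $|\dot\phi|\le 2\|A\|_{J}\,\phi$, so $\phi(t)\le\phi(0)e^{2\|A\|_J|t|}=0$. Hence $\gamma(I)\subset M_f$, which is the dynamical invariance. Equivalently, $X$ is tangent to $M_f$ and its restriction to $M_f$ has integral curves that stay in $M_f$.

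\emph{Tangency check.} As a consistency check, at points of $M_f$ we have $df_i(X)=X(f_i)=a_i^jf_j=0$. Since the $df_i$ are independent there, $T_xM_f=\bigcap_i\ker df_i$, so $X(x)\in T_xM_f$. This alone would give invariance via the standard fact that a vector field tangent to a closed embedded submanifold has integral curves that remain in it. I would still present the ODE argument as the main proof, since it does not depend on this fact.

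The main obstacle is only interpretive: the hypothesis must be read as pointwise linear independence of the differentials along $M_f$, and the invariance must be stated for maximal integral curves when $X$ is not complete. Once that is fixed, both parts are routine.
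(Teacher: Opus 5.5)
Your proposal is correct and matches the paper's proof: the regular level set theorem applied to $F=(f_1,\ldots,f_k)$ gives the submanifold structure, and uniqueness for the linear system $\dot u=A(t)u$ with $u(t_0)=0$ along an integral curve gives invariance. The Gr\"onwall variant and the tangency check are valid additions that the paper's argument does not need.
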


\begin{proof}
Let
\[
F=(f_1,\ldots,f_k):M\longrightarrow \mathbb{R}^k .
\]
The functional independence assumption means that $F$ has rank $k$ at every point of $M_f=F^{-1}(0)$. Hence $M_f$ is an embedded submanifold of codimension $k$ by the regular level set theorem \cite{Lee2012}.

Let $\gamma:I\to M$ be an integral curve of $X$ such that $\gamma(t_0)\in M_f$. Set
\[
u_i(t)=f_i(\gamma(t)),\qquad i=1,\ldots,k.
\]
Then \eqref{eqPI} gives
\[
\dot u_i(t)=a_i^j(\gamma(t))u_j(t),
\]
or, equivalently,
\[
\dot u(t)=A(t)u(t),\qquad A(t)=\big(a_i^j(\gamma(t))\big).
\]
Since $A(t)$ is smooth and $u(t_0)=0$, uniqueness for linear ordinary differential equations implies $u(t)\equiv 0$ on $I$. Therefore $f_i(\gamma(t))=0$ holds for all $i=1,\ldots,k$ and all $t\in I$, so $\gamma(t)\in M_f$ holds for all $t\in I$. Thus $M_f$ is dynamically invariant.
\end{proof}

Following Lemma \ref{lemma1}, we introduce the following generalized notion.

\begin{definition}
\label{particularintegrals}
Let $X$ be the dynamical vector field of a classical mechanical system on a smooth phase space $M$. A family $f_1,\ldots,f_k\in C^\infty(M)$ is called a system of particular integrals if there exist functions $a_i^j\in C^\infty(M)$ such that
\begin{equation*}
X(f_i)=a_i^j f_j,\qquad i=1,\ldots,k.
\end{equation*}
\end{definition}

For $k=1$, this reduces to the usual condition $\dot f=af$. If, for a fixed $i$, one has $a_i^j=0$ for all $j$, then $f_i$ is a constant of motion. In particular, if all coefficients $a_i^j$ vanish, then $f_1,\ldots,f_k$ are constants of motion.
It is clear that a function $f$ that satisfies $\dot{f}=af$ is a particular integral. 

Lemma \ref{lemma1} shows that systems of particular integrals define invariant submanifolds of phase space. Thus, they provide a natural mechanism for restricting the dynamics to lower-dimensional equations of motion.

\section{Reduction and particular Liouville integrability}

In \cite{escobar2024particular}, it is shown that in the case of classical Hamiltonian systems, the existence of a particular integral ($\dot{f}=af$) allows us to find trajectories of the system by solving a reduced system of Hamilton's equations of motion. Concretely, given a particular integral, we restrict the dynamics to the dynamically invariant submanifold of codimension $1$ defined by the zero level set of the given particular integral. The equations of motion of such restricted dynamics can be solved by first solving a system of Hamilton's equations of motion of a Hamiltonian system with the number of degrees of freedom reduced by $1$, and afterwards integrating a sole differential equation.  The dynamics defined by the reduced Hamiltonian system is the projection of the dynamics on the invariant submanifold into the submanifold defined by the level sets of the remaining coordinate to be integrated;  in general, it is not contained in the original dynamics (for details, see \cite{escobar2024particular}). 

In this section, we show that, for a given classical Hamiltonian system, a regular system of particular integrals defines a Hamiltonian reduction whenever its common zero set is coisotropic.  The formulation is independent of local canonical coordinates.

For this section, let $(M,\omega)$ be a symplectic manifold of dimension $2n$ and let $H\in C^\infty(M)$. The Hamiltonian vector field $X_{H}$ for $H$ is determined by
\begin{equation*}
i_{X_H}\omega= dH,
\end{equation*}
and the time-evolution of a function $f\in C^{\infty}(M)$ is given by
\begin{equation*}
\dot f=X_H(f)=\{f,H\}.
\end{equation*}

\subsection{Projected Hamiltonian Dynamics}

We now show that a system of particular integrals in particular involution (weak involution) produces Hamiltonian dynamics after restriction and projection.

\subsubsection{Global geometric description}

We start with a brief review of the geometry of presymplectic manifolds and Hamiltonian dynamics on presymplectic manifolds (for further details, see \cite{gotay1978presymplectic,carinena1985canonical,echeverria1999reduction,bursztyn2013brief}).

A presymplectic manifold is a smooth manifold $N$ endowed with a closed 2-form $\Omega$ (called a presymplectic form). 

\begin{remark}
For our aim, we consider regular presymplectic manifolds, i.e., presymplectic manifolds with presymplectic forms of constant rank.
\end{remark}
Given a regular presymplectic manifold $(N,\Omega)$, around any point $x\in N$, there exist local coordinates\\ $(q^{1},\ldots,q^{s},p_{1},\ldots,p_{s},z^{1},\ldots,z^{r})$, called canonical coordinates, such that $\Omega=dq{i}\wedge dp_{i}$, where $dim(N)=2s+r$ and $2s=rank(\Omega)$ (see Darboux Theorem for presymplectic manifolds in \cite{gracia2024darboux}).

Let $(N,\Omega)$ be a presymplectic manifold of dimension $2s+r$ with $rank(\Omega)=2s$. The distribution 
$$\ker(\Omega)=\lbrace X\in\mathfrak{X}(N):\ i_{X}\Omega=0 \rbrace$$
is called the characteristic distribution of $\Omega$, it is an integrable distribution of rank $r$. In local canonical coordinates, it is generated by the coordinate vector fields $\lbrace \frac{\partial}{\partial z^{1}},\ldots,\frac{\partial}{\partial z^{r}}\rbrace$. 

A function $f\in C^{\infty}(N)$ is said to be admissible for defining Hamiltonian dynamics on $(N,\Omega)$ (shortened to just admissible) if there exists a vector field $X_{f}$ on $N$ such that 
\begin{equation}
\label{eqHamiltonvector}
i_{X_{f}}\Omega=df. 
\end{equation}
Due to the degeneracy of $\Omega$, $X_{f}$ is generally not unique; in fact, if $X$ is such that $i_{X}\Omega=df$, then $i_{X+Z}\Omega=df$ for every $Z\in\ker(\Omega)$. The set of all vector fields on $N$ satisfying (\ref{eqHamiltonvector}) is denoted by $Ham_{f}(N)$.

It has been shown that a function $f\in C^{\infty}(N)$ is admissible if and only if $i_{Z}df=0$ for all $Z\in\ker(\Omega)$. In local terms, a function $f$ is admissible if and only if its local expression in canonical coordinates $(q,p,z)$ satisfies $\frac{\partial f}{\partial z^{j}}=0$ for $j=1,\ldots,r.$ Each $X\in Ham_{f}(N)$ has the local expression
\begin{equation*}
X=\frac{\partial f}{\partial p_{i}}\frac{\partial}{\partial q^{i}}-\frac{\partial f}{\partial q^{i}}\frac{\partial}{\partial p_{i}}+A^{j}\frac{\partial}{\partial z^{j}},
\end{equation*}
for some locally defined function $A^{1},\ldots,A^{r}$.

Let $H\in C^{\infty}(N)$ be an admissible function on $(N,\Omega)$. Since the Hamiltonian vector field $X_{H}$ is not unique, for each point $x\in N$ we have a set of integral curves passing through it (one for each $X\in Ham_{H}(N)$). They are called gauge equivalent trajectories and are supposed to represent the same physical state of the system. In order to represent each physical state by only one trajectory of a mechanical system, the so-called gauge reduction procedure is applied \cite{echeverria1999reduction}. It consists of constructing a Hamiltonian system on a symplectic manifold such that each trajectory represents a class of gauge equivalent trajectories of the original presymplectic Hamiltonian dynamics. For this, it is assumed that the quotient space
\begin{equation*}
\overline{N}=\frac{N}{ker(\Omega)}
\end{equation*}
is a differentiable manifold (it suffices to assume that the foliation of the distribution $\ker(\Omega)$ is simple \cite{libermann2012symplectic,grabowska2023reductions}); it has dimension $2s$ and the natural projection $\pi:N\longrightarrow\overline{N}$ is a submersion that endows it with a symplectic form $\omega$ such that $\pi^{*}\omega=\Omega$. There is one and only one vector field $\overline{X}_{H}$ on $\overline{N}$ such that 
\begin{equation*}
\pi_{*}X=\overline{X}_{H}, \ \forall X\in Ham_{H}(N).
\end{equation*}
In fact, $\overline{X}_{H}$ is the Hamiltonian vector field for the function $\overline{H}\in C^{\infty}(\overline{N})$ such that $\pi^{*}\overline{H}=H$, i.e.,
\begin{equation*}
i_{\overline{X}_{H}}\omega=d\overline{H}.
\end{equation*}
The system $(\overline{N},\omega,\overline{H})$ is called the reduced gauge-free Hamiltonian system for $(N,\Omega,H)$.

Now we state the main result of this paper.
\begin{theorem}
\label{maintheorem}
Let $f_1,\ldots,f_k\in C^\infty(M)$, $k<n$, be a system of particular integrals for $(M,\omega,H)$, i.e., 
\begin{equation*}
\{f_i,H\}=a_{i}^j f_j,
\qquad i=1,\ldots,k,
\end{equation*}
for suitable smooth functions $a_i{}^j$; such that they are in particular involution, i.e.,
\begin{equation}
\{f_i,f_j\}=c_{ij}{}^\ell f_\ell,
\qquad i,j=1,\ldots,k,
\label{eq:weak-involution}
\end{equation}
where the $c_{ij}{}^\ell$ are smooth functions. 

Assume that $f_1,\ldots,f_k$ are functionally independent on the level set
\begin{equation*}
M_f=\{x\in M:\ f_1(x)=0,\ldots,f_k(x)=0\}.
\end{equation*}

Then, $M_f$ is a presymplectic manifold of dimension $2n-k$ with the presymplectic form $\Omega=\iota^{*}\omega$, where $\iota\colon M_f \hookrightarrow M$ is the inclusion map, its characteristic distribution is
\begin{equation*}
\ker\Omega=\operatorname{span}\bigl\{X_{f_1}|_{M_f},\ldots,X_{f_k}|_{M_f}\bigr\};
\end{equation*}
and the restricted Hamiltonian $H_{M_f}=H|_{M_f}$ is constant along the characteristic leaves.

If in addition, the characteristic foliation $\ker(\Omega)$ is simple and its leaf space
\[
  \overline{M}=M_f/\ker(\Omega)
\]
is a smooth manifold for which the quotient map
\(\pi:M_f\to \overline{M}\) is a surjective submersion, then there exists a unique symplectic form $\overline{\omega}$ and a unique function
$\overline{H}\in C^\infty(\overline{M})$ such that
\begin{equation}
  \pi^*\overline{\omega}=\Omega,
  \qquad
  \pi^*\overline{H}=H_{M_f}.
  \label{eq:reduced-form-and-Hamiltonian}
\end{equation}
Moreover, $X_H|_{M_f}$ is projectable and
\begin{equation}
  \pi_*(X_H|_{M_f})=X_{\overline{H}},
  \qquad
  \iota_{X_{\overline{H}}}\overline{\omega}
  =\mathrm d\overline{H}.
  \label{eq:projected-Hamiltonian-vector-field}
\end{equation}
In particular, $\overline{M}$ has dimension $2(n-k)$.
\end{theorem}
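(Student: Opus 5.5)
The plan is to treat $M_f$ as a regular coisotropic submanifold and then apply the standard presymplectic reduction recalled above. By Lemma \ref{lemma1}, $M_f$ is an embedded, dynamically invariant submanifold of codimension $k$, hence of dimension $2n-k$. The form $\Omega=\iota^*\omega$ is closed, since pullback commutes with $d$. Everything else rests on computing $\ker\Omega$ pointwise.

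\textbf{Kernel and constant rank.} Fix $x\in M_f$. Then $T_xM_f=\bigcap_i\ker d f_i(x)$, and its symplectic orthogonal is
\[
(T_xM_f)^{\omega}=\operatorname{span}\{X_{f_1}(x),\ldots,X_{f_k}(x)\},
\]
because $\omega(X_{f_i},v)=\pm df_i(v)$ and $\omega$ is nondegenerate. Functional independence makes the $df_i(x)$ linearly independent, so these $k$ vectors are independent. Next I check that they are tangent to $M_f$. From \eqref{eq:weak-involution},
\[
X_{f_j}(f_i)=\pm\{f_i,f_j\}=\pm c_{ij}{}^\ell f_\ell,
\]
which vanishes on $M_f$. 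Hence $(T_xM_f)^\omega\subset T_xM_f$, that is, $M_f$ is coisotropic.

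For a coisotropic subspace, $\ker\Omega_x=T_xM_f\cap(T_xM_f)^\omega=(T_xM_f)^\omega$. So $\ker\Omega$ is spanned by the $X_{f_i}|_{M_f}$ and has constant rank $k$. Consequently $\operatorname{rank}\Omega=2n-2k$, which is even and nonnegative since $k<n$. This distribution is integrable, as recalled above for regular presymplectic forms; one can also see it directly from $[X_{f_i},X_{f_j}]=\pm X_{\{f_i,f_j\}}$ together with the Leibniz rule.

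\textbf{$H$ is basic.} On $M_f$ we have $X_{f_i}(H)=\pm\{H,f_i\}=\mp a_i^j f_j$, which vanishes there. Hence $dH_{M_f}$ annihilates $\ker\Omega$, so $H_{M_f}$ is admissible and constant along the leaves, which are connected. Moreover, $X_H$ is tangent to $M_f$ by Lemma \ref{lemma1}, and its restriction $X:=X_H|_{M_f}$ satisfies $i_X\Omega=\iota^*(i_{X_H}\omega)=dH_{M_f}$. Thus $X\in Ham_{H_{M_f}}(M_f)$.

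\textbf{Reduction.} Assume the foliation is simple with quotient submersion $\pi$. A form descends to $\overline{M}$ exactly when it is basic, meaning $i_Z\Omega=0$ and $L_Z\Omega=0$ for every $Z\in\ker\Omega$. The second condition follows from closedness, since $L_Z\Omega=d\,i_Z\Omega+i_Z d\Omega=0$.

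This is the step I expect to be the main technical point. Basic forms descend uniquely only when the fibers of $\pi$ are connected, which is what simplicity provides; one then uses local sections of $\pi$ to define $\overline\omega$ and $\overline H$ and checks well-definedness along the fibers. Uniqueness is immediate because $\pi^*$ is injective for a surjective submersion. For the remaining properties:
\begin{itemize}
\item $\overline\omega$ is closed, because $\pi^*d\overline\omega=d\Omega=0$.
\item $\overline\omega$ is nondegenerate, because its kernel would lift to vectors in $\ker\Omega=\ker d\pi$.
\item $\dim\overline M=2n-k-k=2(n-k)$.
\end{itemize}

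\textbf{Projectability.} For $Z\in\ker\Omega$, the field $[Z,X]$ lies in $\ker\Omega$, since
\[
i_{[Z,X]}\Omega=L_Z i_X\Omega-i_XL_Z\Omega=L_Z\,dH_{M_f}=d\bigl(Z(H_{M_f})\bigr)=0.
\]
Hence $X$ preserves the foliation and is $\pi$-related to a well-defined vector field $\overline X$ on the connected-fiber quotient. Finally,
\[
\pi^*(i_{\overline X}\overline\omega)=i_X\Omega=dH_{M_f}=\pi^*d\overline H,
\]
and injectivity of $\pi^*$ gives $i_{\overline X}\overline\omega=d\overline H$. Therefore $\overline X=X_{\overline H}$, which proves \eqref{eq:projected-Hamiltonian-vector-field}.
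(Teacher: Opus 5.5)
Your proposal is correct and follows the paper's proof essentially step by step. It uses the same chain: coisotropy of $M_f$ from the weak involution, $\ker\Omega=(T_xM_f)^\omega=\operatorname{span}\{X_{f_i}\}$ of constant rank $k$, basicness of $H_{M_f}$ and $\Omega$ via the particular-integral condition and Cartan's formula, and then descent plus pullback to get $\overline{\omega}$, $\overline{H}$ and $\pi_*(X_H|_{M_f})=X_{\overline H}$. Your bracket computation $i_{[Z,X]}\Omega=0$ is the infinitesimal form of the paper's remark that the flow of $X_H|_{M_f}$ preserves $\Omega$ and its kernel, and your attention to connected fibers for the descent spells out a detail the paper leaves implicit.
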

\begin{proof}
Let
\[
F=(f_1,\ldots,f_k)\colon M\longrightarrow\mathbb{R}^k.
\]
By the functional-independence hypothesis along $M_f$, we have that $0$ is a regular value of $F$, so, from Lemma \ref{lemma1} we have that $M_f$ is a dynamically invariant submanifold of dimension $2n-k$.

Since $0$ is a regular value of $F$, at every $x\in M_f$ one has
\[
  T_xM_f=\bigcap_{i=1}^k\ker(df_i(x))
\]
and therefore
\begin{equation*}
(T_xM_f)^\omega=\operatorname{span}\bigl\{X_{f_1}(x),\ldots,X_{f_k}(x)\bigr\}.
\label{eq:symplectic-orthogonal}
\end{equation*}
For every $i,j$, we have
\[
  X_{f_i}(f_j)|_{M_f}=\{f_j,f_i\}|_{M_f}=0
\]
by \eqref{eq:weak-involution}; so, each $X_{f_i}$ is tangent to $M_f$, consequently, $M_f$ is coisotropic. Furthermore,
\[
  \ker\Omega_x=T_xM_f\cap(T_xM_f)^\omega=(T_xM_f)^\omega.
\]
The independence of the $df_i$ implies that the vectors $X_{f_i}|_{M_f}$ are independent, so $\ker\Omega$ has constant rank $k$.  Since $\dim M_f=2n-k$, it follows that $rank\Omega=2n-2k$.

For every characteristic vector field $X_{f_i}|_{M_f}$, we have
\[
dH_{M_f}(X_{f_i}|_{M_f})
=X_{f_i}(H)|_{M_f}
=\{H,f_i\}|_{M_f}=0.
\]
Therefore, $H_{M_f}$ is constant on the leaves.

If the quotient is smooth, then there is a unique function $\overline{H}$ satisfying the second identity in
\eqref{eq:reduced-form-and-Hamiltonian}.

The form $\Omega$ is horizontal with respect to $\ker(\Omega)$ by definition of the characteristic distribution.  It is also invariant along characteristic vector fields: for $Z\in\ker\Omega$,
\[
\mathcal L_Z\Omega = d(\iota_Z\Omega)+\iota_Z d\Omega=0.
\]
Hence, $\Omega$ is basic and descends to the unique form
$\overline{\omega}$ in \eqref{eq:reduced-form-and-Hamiltonian}.  This form is closed and nondegenerate, and is therefore symplectic.

Finally, because $X_H$ is tangent to $M_f$,
\[
\iota_{X_H|_{M_f}}\Omega= dH_{M_f}.
\]
The flow of $X_H|_{M_f}$ preserves $\Omega$ and its kernel, so $X_H|_{M_f}$ is projectable. Pulling back the defining equation for its projection gives \eqref{eq:projected-Hamiltonian-vector-field}.  Uniqueness follows from the nondegeneracy of $\overline{\omega}$.
\end{proof}

\begin{remark}
\label{rem:standard-coisotropic-reduction}
The geometric part of Theorem \ref{maintheorem} is the standard coisotropic reduction
\cite{gotay1978presymplectic,echeverria1999reduction,deLeon_2024}.
The role of the particular integrals is to provide the invariance
condition directly from the dynamics. This identifies systems of particular integrals with regular generators of an invariant constraint ideal.
\end{remark}

\subsubsection{Local canonical coordinate description}
Consider a Hamiltonian system with $n$ degrees of freedom and a Hamiltonian function in canonical coordinates in phase space $M$ ($dim(M)=2n$) $H=H(q^{1},\ldots,q^{n},p_{1},\ldots,p_{n})$. 

Let $k\leq n$ and $f_{1},\cdots,f_{k}$ be functionally independent particular integrals in involution, i.e.,
\begin{equation*}
\dot{f}_{i} = \ a^{j}_{i}f_{j}
\end{equation*}
for some smooth functions $a^{j}_{i}$ on the phase space $M$, and
\begin{equation*}
\lbrace f_{l},f_{s}\rbrace=0.
\end{equation*}
Since $f_{1},\cdots,f_{k}$ are functionally independent and in involution, the Carathéodory-Jacobi-Lie theorem \cite{libermann2012symplectic,AKN2006,Lee2012} states that we can always find a (local) canonical transformation $(q,p)\mapsto (Q,P)$ such that the given functions are among the new canonical momenta, let us say without loss of generality, $P_{i}=f_{i}$ for $i=1,\ldots,k$ (the corresponding canonical transformation can be described by a generating function of the form $F=F_{2}(q^{1},\ldots,q^{n},f_{1},\ldots,f_{k},P_{k+1}\ldots,P_{n})-Q^{j}P_{j}$ \cite{GPS2002,landau1982mechanics,Calkin96}.) Hamilton's equations of motion in the new canonical coordinates $(Q,P)$ take the form
\begin{equation*}
\label{eqK}
\left\lbrace \begin{array}{c}
\dot{Q}^{1}= \frac{\partial K}{\partial P_{1}},\\ 
\vdots \\ 
\dot{Q}^{k}= \frac{\partial K}{\partial P_{k}},\\ 
\dot{Q}^{k+1}= \frac{\partial K}{\partial P_{k+1}},\\
\vdots\\
\dot{Q}^{n}= \frac{\partial K}{\partial P_{n}},\\ 
\dot{P}_{1}=\dot{f}_{1}=-\frac{\partial K}{\partial Q^{1}},\\ 
\vdots \\ 
\dot{P}_{k}=\dot{f}_{k}=-\frac{\partial K}{\partial Q^{k}},\\ 
\dot{P}_{k+1}= -\frac{\partial K}{\partial Q^{k+1}},\\
\vdots\\
\dot{P}_{n}= -\frac{\partial K}{\partial Q^{n}},
\end{array} \right.
\end{equation*}
with $K=H(Q,P)$ the new Hamiltonian function. Now we restrict the dynamics to $M_{f}=\lbrace x\in M: \ f_{1}(x)=0, \ldots,f_{k}(x)=0\rbrace$ (by Lemma \ref{lemma1} we have that $M_{f}$ is a dynamically invariant submanifold of $M$ of dimension $2n-k$); the equations of motion on the coordinates $Q^{1},\ldots,Q^{n},P_{k+1},\ldots,P_{n}$ on $M_{f}$ are
\begin{equation}
\label{reducedsystem}
\left\lbrace \begin{array}{c}
\dot{Q}^{1}= \frac{\partial K}{\partial P_{1}},\\ 
\vdots \\ 
\dot{Q}^{k}= \frac{\partial K}{\partial P_{k}},\\ 
\dot{Q}^{k+1}= \frac{\partial K}{\partial P_{k+1}},\\
\vdots\\
\dot{Q}^{n}= \frac{\partial K}{\partial P_{n}},\\ 
\dot{P}_{k+1}= -\frac{\partial K}{\partial Q^{k+1}},\\
\vdots\\
\dot{P}_{n}= -\frac{\partial K}{\partial Q^{n}}.
\end{array} \right.
\end{equation}

The local expression of the presymplectic form $\Omega$ on $M_{f}$ is
\begin{equation*}
\Omega=\sum_{j=k+1}^{n}dQ^{j}\wedge dP_{j}.
\end{equation*}
Since $M_{f}$ is, in general, a presymplectic manifold, which may have either even or odd dimension, the system (\ref{reducedsystem}) is not, in general, a system of Hamilton's equations of motion (even in the case when $M_{f}$ has even dimension, the inherited 2-form may be degenerate); however, since we take $P_{i}=f_{i}=0$ for $i=1,\ldots,k$, thus, $\dot{P}_{i}=-\frac{\partial K}{\partial Q^{i}}\big|_{{P_1=0,\ldots,P_k=0}}=0$, the function $K\big|_{M_{f}}$ does not depend on the coordinates $Q^{1},\ldots,Q_{k}$; therefore, the time evolution of each coordinate $Q^{k+1},\cdots,Q^{n},P_{k+1},\cdots,P_{n}$ in $M_{f}$ is independent of $Q^{1},\ldots,Q_{k}$. From the previous analysis, we can split the system (\ref{reducedsystem}) as follows.
\begin{equation}
\label{reducedhamsys}
\left\lbrace \begin{array}{c}
\dot{Q}^{k+1}= \frac{\partial K}{\partial P_{k+1}},\\
\vdots\\
\dot{Q}^{n}= \frac{\partial K}{\partial P_{n}},\\ 
\dot{P}_{k+1}= -\frac{\partial K}{\partial Q^{k+1}},\\
\vdots\\
\dot{P}_{n}= -\frac{\partial K}{\partial Q^{n}},
\end{array} \right.
\end{equation}
and
\begin{equation}
\label{inteq}
\left\lbrace \begin{array}{c}
\dot{Q}^{1}= \frac{\partial K}{\partial P_{1}},\\ 
\vdots \\ 
\dot{Q}^{k}= \frac{\partial K}{\partial P_{k}}.\\ 
\end{array} \right.
\end{equation}

We find that system (\ref{reducedhamsys}) is a system of Hamilton's equations of motion in coordinates $(Q^{k+1},\cdots,Q^{n},P_{k+1},\cdots,P_{n})$. In fact, we see that $(Q^{k+1},\cdots,Q^{n},P_{k+1},\cdots,P_{n})$ are local coordinates of the level set $M_{f,Q}=\lbrace x\in M_{f}:\ Q^{1}(x)=\alpha_{1},\ldots,Q^{k}(x)=\alpha_{k} \rbrace$ for any real values of $\alpha_{1},\ldots,\alpha_{k}$, and $K\big|_{M_{f,Q}}$ is the Hamiltonian function for the local Hamiltonian system on $M_{f,Q}$ with Hamilton's equations of motion (\ref{reducedhamsys}).

\begin{remark}
$M_{f,Q}$ is not a dynamically invariant set since $Q^{1},\ldots,Q^{k}$ are not necessarily constant along the trajectories of the system. 
\end{remark}

The dynamics of such a Hamiltonian system on $M_{f,Q}$ is the projection of the original dynamics into the submanifold $M_{f,Q}$ in the same sense as in \cite{escobar2024particular}, i.e., if $\pi:M_{f}\longrightarrow M_{f,Q}$ is the projection map from $M_{f}$ to the submanifold $M_{f,Q}$ defined in the canonical coordinates $(Q,P)$ by 
\begin{equation*}
\pi(Q^{1},\cdots,Q^{k},Q^{k+1},\ldots,Q^{n},P_{k+1},\cdots,P_{n})\ = \ (Q^{k+1},\cdots,Q^{n},P_{k+1},\cdots,P_{n}),
\end{equation*}
then the trajectories of the system on $M_{f,Q}$ are images by $\pi$ of the trajectories of the restricted system on $M_{f}$, i.e., $\gamma(t)=(\,Q^{k+1}(t),\cdots,Q^{n}(t),P_{k+1}(t),\cdots,P_{n}(t)\,)$ is a trajectory of the system on $M_{f,Q}$ if and only if there is a trajectory $\overline{\gamma(t)}=(Q^{1}(t),\cdots,Q^{k}(t),Q^{k+1}(t),\ldots,Q^{n}(t),P_{k+1}(t),\cdots,P_{n}(t))$ such that $\pi(\overline{\gamma(t)})=\gamma(t)$ for every $t$.

\begin{remark}
The system of equations of motion on $M_{f}$ (system (\ref{reducedsystem})) can be solved by first solving the system of Hamilton's equations of motion (\ref{reducedhamsys}) and then integrating system (\ref{inteq}): Rigorously speaking, to lift the dynamics from $M_{f,Q}$ to $M_{f}$, we consider trajectories $\gamma(t)=(Q^{k+1}(t),\cdots,Q^{n}(t),P_{k+1}(t),\cdots,P_{n}(t))$ of the system on $M_{f,Q}$ (solutions of system (\ref{reducedhamsys})) and construct trajectories $\overline{\gamma(t)}=(Q^{1}(t),\cdots,Q^{k}(t),Q^{k+1}(t),\ldots,Q^{n}(t),P_{k+1}(t),\cdots,P_{n}(t))$ on $M_{f}$ by integrating system (\ref{inteq}). 
\end{remark}

Under the regularity conditions of Theorem \ref{maintheorem}, $M_{f,Q}$ is the local representative of $\overline{M}$. In terms of the presymplectic structure, this projection procedure is known as the gauge reduction procedure. It consists of removing the redundancy of solutions of Hamiltonian vector fields for a given Hamiltonian function on a presymplectic manifold, and it ensures the existence of a unique Hamiltonian function on a reduced manifold related to the original Hamiltonian by means of a natural projection \cite{echeverria1999reduction}; which leads to a unique Hamiltonian vector field determining the global dynamics.

\begin{example}
Let us consider again the classical central-force problem in $\mathbb{R}^3$. Hamilton's equations of motion in the canonical coordinates $(r,\,\phi,\,\rho)$ (see example \ref{excfp}) are
\begin{equation*}
\label{eqmcp}
\left\lbrace \begin{array}{c}
 {\dot r}  \ = \ \frac{1}{m}\,p_r \ + \ \frac{1}{m}\frac{\rho}{r}\,p_\rho ,
\\ 
 {\dot p}_r \ = \ \frac{1}{m}\frac{\rho}{r^2}\, p_{\rho}\, p_{r} \ - \ V^\prime(r),
  \\ 
 {\dot \rho}  \ = \ \frac{1}{m}\,p_\rho \ + \ \frac{1}{m}\frac{\rho}{r}\,p_r,
 \\ 
 {\dot p}_\rho \ = \ -\frac{1}{m\,r}\, p_{\rho}\, p_{r} \ + \ \frac{1}{m\,\rho^3}\,p_\phi^2 ,
 \\ 
 {\dot \phi}  \ = \ \frac{1}{m\,\rho^2}\,p_\phi,
 \\ 
 {\dot p}_\phi  \ = \ 0 \ .
\end{array} \right.
\end{equation*}

We have that $p_{\phi},p_{\rho}$ are independent particular integrals in involution. By restricting the dynamics to $p_{\phi}=0,p_{\rho}=0$, the Hamilton equations of motion reduce to
\begin{equation}
\left\lbrace \begin{array}{c}
\label{eqmcpred2}
 {\dot r}  \ = \ \frac{1}{m}\,p_r,
\\ 
 {\dot p}_r \ = \  -  V^\prime(r),
  \\ 
 {\dot \rho}  \ = \  \frac{1}{m}\frac{\rho}{r}\,p_r\ .
\end{array} \right.  
\end{equation}
Finally, we project the dynamics described by the system (\ref{eqmcpred2}) into the level submanifold $\rho=\alpha$ ($\alpha$ a real constant), and we obtain the system of Hamilton's equations of motion
\begin{equation*}
\left\lbrace \begin{array}{c}
 {\dot r}  \ = \ \frac{1}{m}\,p_r,
\\ 
 {\dot p}_r \ = \  -  V^\prime(r),
\end{array} \right.  
\end{equation*}
which corresponds to the Hamiltonian system with Hamiltonian function
\begin{equation*}
\frac{1}{2\,m}\, p_{r}^2  \ + \ V(r) \ .
\end{equation*} 
\end{example}

Let us see another interesting example. 

\begin{example}
Let us consider again the four-body problem of example \ref{exfourbody}. We have shown that the scalar quantities $rp_{13}=\textbf{r}_{13}\cdot \textbf{p}_{13}$ and $l_{13}=\textbf{p}_ {13}^{2}-w^{2}\textbf{r}_{13}^{2}$ are independent particular integrals; however, they are not in involution, so we cannot perform the projection of the dynamics as in the previous example. 

On the other hand, let us now consider the vector quantities $\textbf{f}=\textbf{p}_{24}-w\textbf{r}_{13}$ and $\textbf{g}=w\textbf{r}_{24}+\textbf{p}_{13}$,
\begin{equation*}
\textbf{f}=\frac{1}{\sqrt{2}}\textbf{P}_{1}-\frac{1}{\sqrt{6}}\textbf{P}_{2}-\frac{2}{\sqrt{3}}\textbf{P}_{3}+\frac{w}{\sqrt{2}}\textbf{S}_{1}+\sqrt{\frac{3}{2}}w\textbf{S}_{2}
\end{equation*}
and
\begin{equation*}
\textbf{g}=-\frac{\sqrt{2}}{2}\textbf{P}_{1}-\frac{\sqrt{6}}{2}\textbf{P}_{2}+\frac{\sqrt{2}}{2}w\textbf{S}_{1}-\frac{\sqrt{6}}{6}w\textbf{S}_{2}+\frac{2\sqrt{3}}{3}w\textbf{S}_{3}.
\end{equation*}
We have
\begin{equation*}
\dot{\textbf{f}}=-w\textbf{g}\quad\textit{and}\quad\dot{\textbf{g}}=w\textbf{f};
\end{equation*}
So, they are independent vector particular integrals (their components are real functions that are particular integrals.) If they were in involution, then we could perform the projection of the dynamics into a Hamiltonian system with the number of degrees of freedom reduced by $4$ (two degrees for each vector quantity); however, in this case, we have
\begin{equation*}
\lbrace \textbf{f},\textbf{g}\rbrace=4w,
\end{equation*}
i.e., they are not in involution.

Not everything is lost; we can see that taking $\textbf{F}=\frac{1}{4w}\textbf{f}$ and $\textbf{G}=\textbf{g}$ (or $\textbf{F}=\textbf{f}$ and $\textbf{G}=\frac{1}{4w}\textbf{g}$) we have that 
\begin{equation*}
\lbrace\textbf{F},\textbf{G}\rbrace=1;
\end{equation*}
so we can find a (local) canonical transformation such that $\textbf{F}$ and $\textbf{G}$ are canonical conjugate coordinates. So in this case, we can still perform the projection of the dynamics into a reduced Hamiltonian system, but the number of degrees of freedom is reduced by $2$. 
\end{example}

The previous example shows us that in some cases, it is possible to perform a projection of the dynamics even when we have two particular integrals that are not in involution. Let us study such cases. 

Let $f$ and $g$ be two independent particular integrals of a given Hamiltonian system. It is easy to see that when $\lbrace f,g\rbrace=c$, with $c$ being a nonzero real number, then we can construct two particular integrals $F=\frac{1}{c}f$ and $G=g$ such that $\lbrace F,G\rbrace=1$. Thus, we can find a (local) canonical transformation such that $F$ and $G$ are conjugate canonical coordinates, allowing us to perform the projection of dynamics into a reduced Hamiltonian system with the number of degrees of freedom reduced by $1$.

\subsection{Particular Liouville integrability}

In \cite{escobar2024particular}, particular integrability is formulated in terms of Lie integrability by quadratures. More precisely, if a Hamiltonian system with $n$ degrees of freedom admits $n$ functionally independent particular integrals in particular involution, then the dynamics can be restricted to a dynamically invariant submanifold on which the resulting equations possess enough commuting symmetries to be integrated by quadratures. The earlier Lie-integrability construction does not require or use a nontrivial reduced Hamiltonian system.

\begin{definition}
\label{def:particularLiouville}
We say that a Hamiltonian system with $n$ degrees of freedom is \emph{particularly Liouville integrable} if there exists a set of $k<n$ particular integrals satisfying the hypothesis of Theorem \ref{maintheorem} such that the reduced Hamiltonian system, by means of those particular integrals, is completely Liouville integrable.
\end{definition}

\begin{remark}
Particular Liouville integrability should be distinguished from the particular Lie integrability studied in \cite{escobar2024particular}. In the Lie-integrable case, we have exactly $n$ particular integrals in particular involution. In the Liouville-integrable case introduced here, $k<n$ and the reduction produces a genuine Hamiltonian system; hence, the usual consequences of Liouville integrability apply to the reduced phase space. In particular, under the standard regularity and compactness hypotheses, the reduced dynamics is quasi-periodic on invariant Liouville tori \cite{AKN2006}.
\end{remark}

This notion also places several restricted forms of integrability in a common framework. If the functions $f_i$ are genuine first integrals, then they are particular integrals with $a_i^j=0$. Thus, Liouville integrability on a regular common level set of first integrals can be viewed as a special case of particular Liouville integrability. In this sense, the weak integrability considered in \cite{rosquist1995invariants,pucacco2004non} and the restricted integrability studied in \cite{maciejewski2025non} fit naturally into the present framework whenever the corresponding restriction admits the Hamiltonian quotient described above.

Let us illustrate this notion with an example.

\begin{example}
Let us consider the planar two-body Coulomb problem in a constant magnetic field, i.e., let us consider two Coulomb charges of opposite sign $(m_1,e_1=e>0\,;\,m_2,e_2=-e)$ moving on a plane with an attractive interaction $\frac{e_1\,e_2}{|\bf{r}_1-\bf{r}_2|}$. Additionally, they are subjected to the presence of an external constant magnetic field $\bf{B}$ perpendicular to the plane. The Hamiltonian system has $4$ degrees of freedom, its Hamiltonian function in canonical coordinates is

\begin{equation*}
H = \ \frac{1}{2\,m_1}\left({\bf{p}_1}-\frac{e}{2}{\bf{B}}\times
{\bf{r}_1}\right)^2 \ + \ \frac{1}{2\,m_2}\left({\bf{p}_2}+\frac{e}{2}{\bf{B}}\times
{\bf{r}_2}\right)^2 \ - \  \frac{e^2}{|\bf{r}_1-\bf{r}_2|} \ ,
\end{equation*}

This system possesses three functionally independent constants of motion, namely, $H$ and the two components $K_{x}$ and $K_{y}$ of the Pseudomomentum ${\bf K}={\bf p}_1+{\bf p}_2+\frac{e}{2}{\bf B}\times ({\bf r}_{1}-{\bf r}_2)$; they are in involution (so the system is partially integrable).

We consider the total canonical angular momentum
\[
J_z=(\boldsymbol r_1\times\boldsymbol p_1+\boldsymbol r_2\times\boldsymbol p_2)_z.
\]
We have that
\[
  \{J_z,H\}=0,
  \qquad
  \{J_z,K_x\}=K_y,
  \qquad
  \{J_z,K_y\}=-K_x.
\]
Consequently, $J_z$ is constant along the characteristic leaves and descends to a function $\overline{J}_{z}$ on $\overline{M}$.

We restrict the dynamics to the invariant submanifold defined by $K_{x}=0,K_{y}=0$; then we obtain Hamiltonian dynamics with $2$ degrees of freedom by projecting the restricted dynamics as described in the previous section. Such a Hamiltonian system is completely Liouville integrable since it has two functionally independent constants of motion in involution, namely, its Hamiltonian function $\overline{H}$ and $\overline{J_z}$. We conclude that the original Hamiltonian system is particularly Liouville integrable.
\end{example}

\section{Lifts of natural Hamiltonian systems and particular integrals}

In the previous section, we studied the conditions under which it is possible to project the dynamics of a Hamiltonian system into a reduced Hamiltonian system by means of the existence of particular integrals. In this section, we study lifts of mechanical Hamiltonian systems that model natural Hamiltonian systems with particular integrals, allowing a projection of the dynamics into another natural Hamiltonian system with fewer degrees of freedom.

A lift of a Hamiltonian system is a procedure that maps its dynamics to another system. Usually, the trajectories of the original system are projections of the trajectories of the second one onto an embedded submanifold (not necessarily dynamically invariant) of the phase space. 

The natural phase space for Hamiltonian mechanics is the cotangent bundle $T^{*}Q$ of the configuration space $Q$; it has a natural symplectic manifold structure given by the canonical symplectic form $\omega_{Q}=-d\theta_{Q}$, where $\theta_{Q}$ is the Liouville 1-form\cite{AMRC2008}. 

Let $(Q,g)$ be a semi-Riemannian manifold of dimension $n$.

\begin{definition}\cite{godbillon1969geometrie,AMRC2008,iglesias2023mechanical}
A mechanical Hamiltonian function on $(Q,g)$ is a real smooth function $H:T^{*}Q\longrightarrow \mathbb{R}$ of the form
\begin{equation*}
H=(g^{*})^{hom}+V\circ \pi_{Q},
\end{equation*}
with $V$ as a possible time-dependent smooth function in $Q$, $g^{*}$ the dual of $g$, and $(g^{*})^{hom}:T^{*}Q\longrightarrow \mathbb{R}$ the homogeneous function associated with $g^{*}$ given by $$(g^{*})^{hom}(\alpha_{q})=\frac{1}{2}g^{*}_{q}(\alpha_{q},\alpha_{q})\quad \forall \alpha_{q}\in T_{q}^{*}Q.$$ 
Hamiltonian systems with mechanical Hamiltonian functions are called natural Hamiltonian systems \cite{Arnold78}.
\end{definition}
The homogeneous function $(g^{*})^{hom}$ represents the kinetic energy function on the configuration space $Q$, and $V$ is the potential energy function; thus, a mechanical Hamiltonian function represents the total energy of a physical system.

Let $(q^{1},\ldots,q^{n},p_{1},\ldots, p_{n})$ be local canonical coordinates on $T^{*}Q$, i.e., $\omega_Q=dq^{i}\wedge dp_{i}$; Suppose that the local expression of $g$ is 
$$g=g_{ij}dq^{i}\otimes dq^{j}$$
with smooth functions $g_{ij}$ on a neighborhood of $Q.$ Then,  the local expression of the dual metric $g^*$ is given by 
\begin{equation*}
g^{*}=g^{ij}\frac{\partial}{\partial x^{i}}\otimes\frac{\partial}{\partial x^{j}},
\end{equation*}
where $(g^{ij})$ is the inverse matrix of $(g_{ij})$. 
Therefore, 
\begin{equation*}
(g^{*})^{hom}(q^{1},\ldots,q^{n},p_{1},\ldots, p_{n})=\frac{1}{2}g^{ij}p_{i}p_{j}.
\end{equation*}

The expression of a mechanical Hamiltonian function $H$ in canonical coordinates is
\begin{equation*}
H(q^{1},\ldots,q^{n},p_{1},\ldots, p_{n})=\frac{1}{2}g^{ij}p_{i}p_{j}+V(q^{1},\ldots,q^{n},t);
\end{equation*}
so the Hamiltonian vector field for $H$ is 
\begin{equation*}\label{hamiltonian}
X_H=g^{ij}p_i\frac{\partial }{\partial q^{j}} + (\Gamma_{si}^jg^{sl}p_jp_l-\frac{\partial V}{\partial q^i})\frac{\partial }{\partial p_i}
\end{equation*}
where $\Gamma_{si}^j$ are the Christoffel symbols of the metric $g$.

\subsection{The Eisenhart lift}
In 1928, Eisenhart introduced a lift that realizes the trajectories of natural Hamiltonian systems as projections of geodesics in an extended configuration space \cite{eisenhart1928dynamical}. We briefly recall the construction; see also \cite{szydlowski1998eisenhart,cariglia2015eisenhart,carinena2017superintegrable}.

Let \((Q,g)\) be an \(n\)-dimensional Riemannian manifold. In local coordinates \((q^1,\ldots,q^n)\), write
\[
g=g_{ij}(q)dq^i\otimes dq^j,
\qquad
ds^2=g_{ij}(q)dq^idq^j
\]
when using the symmetric product notation. Consider the natural Hamiltonian system on \(T^*Q\) with Hamiltonian
\begin{equation*}
H(q,p)
=
\frac{1}{2}g^{ij}(q)p_ip_j+V(q).
\end{equation*}
Assume, locally, that \(V\) is nonzero. The Eisenhart lift is defined on \(Q\times\mathbb R\), with auxiliary coordinate \(z\), by the metric
\begin{equation*}
d\sigma^2
=
g_{ij}(q)dq^idq^j+\frac{1}{V(q)}dz^2.
\end{equation*}
The corresponding geodesic Hamiltonian on \(T^*(Q\times\mathbb R)\) is
\begin{equation*}
\widetilde H(q,z,p,p_z)
=
\frac{1}{2}g^{ij}(q)p_ip_j
+
\frac{1}{2}V(q)p_z^2.
\end{equation*}
Since \(z\) is cyclic, \(p_z\) is a first integral of \(\widetilde H\). The auxiliary equations are
\begin{equation*}
\dot z=V(q)p_z,
\qquad
\dot p_z=0.
\end{equation*}
Thus, the pair \(z,p_z\) forms a system of particular integrals in the sense of Definition \ref{particularintegrals}. In particular, the hypersurface
\[
p_z=0
\]
is dynamically invariant, and on it \(z\) is constant. The induced dynamics on the variables \((q^i,p_i)\) is the geodesic dynamics generated by
\begin{equation*}
h(q,p)
=
\frac{1}{2}g^{ij}(q)p_ip_j.
\end{equation*}

On the other hand, on the invariant level \(p_z=\sqrt{2}\), the projected equations for \((q^i,p_i)\) coincide with the Hamilton equations generated by the original Hamiltonian \(H\), since
\[
\widetilde H|_{p_z=\sqrt{2}}
=
\frac{1}{2}g^{ij}(q)p_ip_j+V(q)
=
H.
\]
On this level, \(z\) is generally not constant since
\[
\dot z=\sqrt{2}\,V(q).
\]
Hence, the original dynamics is recovered by projection along the auxiliary coordinate \(z\), rather than by restricting to a submanifold \(z=\mathrm{constant}\). This is the usual geometric role of the Eisenhart lift.

Let \(f\in C^\infty(T^*Q)\) be regarded as a function on \(T^*(Q\times\mathbb R)\) by pullback under the natural projection. A direct computation gives:
\begin{equation*}
\label{eq:eisenhartbracket}
\{f,\widetilde H\}_{T^*(Q\times\mathbb R)}
=
\{f,H\}_{T^*Q}
+
\frac{\partial V}{\partial q^i}
\frac{\partial f}{\partial p_i}
\left(1-\frac{1}{2}p_z^2\right).
\end{equation*}
Therefore, if \(f\) is a first integral of the original Hamiltonian system, then
\begin{equation*}
\{f,\widetilde H\}_{T^*(Q\times\mathbb R)}
=
\frac{\partial V}{\partial q^i}
\frac{\partial f}{\partial p_i}
\left(1-\frac{1}{2}p_z^2\right).
\end{equation*}
Thus \(f\) is conserved on the invariant level \(p_z^2=2\), but it is not necessarily a first integral of the full Eisenhart-lifted system. More precisely, for each regular value \(c\) of \(f\), the pair
\[
f-c,
\qquad
p_z^2-2
\]
forms a system of particular integrals. Indeed,
\begin{equation*}
\frac{d}{dt}(f-c)
=
-\frac{1}{2}
\frac{\partial V}{\partial q^i}
\frac{\partial f}{\partial p_i}
\left(p_z^2-2\right),
\qquad
\frac{d}{dt}(p_z^2-2)=0.
\end{equation*}
Consequently, the common level set
\[
\{f=c,\ p_z^2=2\}
\]
is dynamically invariant for the Eisenhart-lifted dynamics.
\begin{example}
\label{ex:eisenhartharmonicoscillator}
Consider the two-dimensional planar isotropic harmonic oscillator
\[
H(x,y,p_x,p_y)
=
\frac{1}{2}(p_x^2+p_y^2)
+
\frac{1}{2}(x^2+y^2)
\]
on \(T^*\mathbb R^2\). This system is superintegrable; independent first integrals are
\[
H,\qquad
L_z=xp_y-yp_x,\qquad
h_x=p_x^2+x^2 .
\]

With the normalization used above, the Eisenhart lift is
\[
\widetilde H
=
\frac{1}{2}(p_x^2+p_y^2)
+
\frac{1}{4}(x^2+y^2)p_z^2 .
\]
Indeed, since \(V=\frac12(x^2+y^2)\), the lifted term is
\[
\frac12 Vp_z^2
=
\frac14(x^2+y^2)p_z^2 .
\]
The level \(p_z^2=2\) recovers the original Hamiltonian:
\[
\widetilde H|_{p_z^2=2}
=
\frac{1}{2}(p_x^2+p_y^2)
+
\frac{1}{2}(x^2+y^2)
=
H.
\]

A direct computation gives
\[
\{H,\widetilde H\}_{T^*(Q\times\mathbb R)}
=
(xp_x+yp_y)
\left(1-\frac12p_z^2\right),
\]
\[
\{L_z,\widetilde H\}_{T^*(Q\times\mathbb R)}
=
0,
\]
and
\[
\{h_x,\widetilde H\}_{T^*(Q\times\mathbb R)}
=
2xp_x
\left(1-\frac12p_z^2\right).
\]
Thus \(L_z\) is a genuine first integral of the Eisenhart lift. By contrast, \(H\) and \(h_x\) are conserved on the invariant level
\[
p_z^2=2,
\]
but they are not generally first integrals of the full lifted system.

More precisely, for each regular value \(c\), the pairs
\[
H-c,\qquad p_z^2-2,
\]
and
\[
h_x-c,\qquad p_z^2-2
\]
form systems of particular integrals of the Eisenhart-lifted system. Therefore, the corresponding common level sets
\[
\{H=c,\ p_z^2=2\},
\qquad
\{h_x=c,\ p_z^2=2\}
\]
are dynamically invariant.
\end{example}

It is well known that the Eisenhart lift maps polynomial constants of motion of the original system to polynomial constants of motion of the lifted geodesic system. In fact, given a polynomial constant of motion $g=\displaystyle{\sum_{k=1}^{m}}g_{k}$, with $g_{k}$ being the homogeneous component of degree $k$ in the momenta, the Eisenhart lift of $g$ is
\begin{equation*}
\tilde{g}=\displaystyle{\sum_{k=1}^{m}}g_{k}\left(\frac{p_z}{\sqrt{2}}\right)^{m-k};
\end{equation*}
$\tilde{g}$ is a constant of motion of $\tilde{H}$ \cite{cariglia2014hidden}.

\begin{example}
The harmonic oscillator previously considered provides an example of an Eisenhart lift for a system defined on $\mathbb R^2$ with the Euclidean metric. An interesting non-Euclidean example is given by the so-called two-dimensional Darboux III Hamiltonian system 
    \begin{equation*}
H_\lambda(x,y,p_x,p_y) = \frac{p_x^2+p_y^2}{2 (1+ \lambda (x^2 + y^2))} + \frac{x^2+y^2}{2 (1+ \lambda (x^2 + y^2))} ,
    \end{equation*}
where $\lambda$ is a positive parameter. This Hamiltonian describes the motion of a particle moving on a conformally flat surface with metric 
\begin{equation*}
    d\sigma^2 = (1+ \lambda (x^2 + y^2)) (dx^2 + dy^2) .
\end{equation*}
This system (together with its $N$-dimensional generalization and quantum versions, see \cite{BALLESTEROS2008505,BALLESTEROS20111431,BALLESTEROS2023133618,baena2026fock}) is superintegrable, since 
\begin{equation*}
    H_\lambda, \qquad
C^{(2)}=xp_y-yp_x,\qquad
I_{xx}=p_x^2-(2 \lambda H_\lambda- 1) x^2 ,
\end{equation*}
are functionally independent first integrals. Thus,
\begin{equation*}
    V(x,y) = \frac{x^2+y^2}{2 (1+ \lambda (x^2 + y^2))}
\end{equation*}
and therefore the metric on $Q \times \mathbb R$ defining the lift is given by
\begin{equation*}
    d\sigma^2 = (1+ \lambda (x^2 + y^2)) (dx^2 + dy^2) + 2 \left(\lambda + \frac{1}{x^2+y^2} \right) dz^2.
\end{equation*}
Finally, the Eisenhart lifted Hamiltonian is given by
\begin{equation*}
    \widetilde H_\lambda(x,y,z,p_x,p_y,p_z) = \frac{p_x^2+p_y^2}{2 (1+ \lambda (x^2 + y^2))} + \frac{x^2+y^2}{4 (1+ \lambda (x^2 + y^2))} p_z^2 .
\end{equation*}
Similarly to the harmonic oscillator case, the original Hamiltonian is recovered by the level $p_z^2=2$ of the lifted Hamiltonian, i.e. $\widetilde{H}_\lambda|_{p_z^2=2} = H_\lambda$. Moreover, we have
\begin{equation*}
    \{H_\lambda,\widetilde H_\lambda\}_{T^*(Q\times\mathbb R)} = \left(1-\frac12p_z^2\right) \frac{xp_x+yp_y}{(1+ \lambda (x^2 + y^2))^3},
\end{equation*}
and
\begin{equation*}
    \{I_{xx},\widetilde H_\lambda\}_{T^*(Q\times\mathbb R)} = 2 \left(1-\frac12p_z^2\right) \frac{x (p_x - \lambda y (x p_y - y p_x))}{(1+ \lambda (x^2 + y^2))^3},
\end{equation*}
while 
\begin{equation*}
    \{C^{(2)},\widetilde H_\lambda\}_{T^*(Q\times\mathbb R)} = 0 .
\end{equation*}
Thus, similarly to the harmonic oscillator, \(C^{(2)}\) is a genuine first integral of the Eisenhart lift while \(H_\lambda\) and \(I_{xx}\) are only conserved on the invariant level
\[
p_z^2=2,
\]
and thus they are not first integrals of the full lifted system. Note that the limit $\lambda \to 0$ recovers the results of the previous example, and thus this system can be seen as a nonlinear deformation of the Harmonic oscillator.
\end{example}

\subsection{Lifts producing particular integrals}
\label{secother}
The Eisenhart lift maps appropriately homogenized polynomial first integrals to lifted first integrals. In this subsection, we study other lifts that do not necessarily map the constants of motion of the base system into the constants of motion of the lift. They lift completely integrable Hamiltonian systems into particular integrable ones that are not completely integrable a priori. 

Let us consider again a natural Hamiltonian system on $T^{*}Q$ with the Hamiltonian function in local canonical coordinates given by
\begin{equation*}
H(q^{1},\ldots,q^{n},p_{1},\ldots, p_{n})=\frac{1}{2}g^{ij}(q^{1},\ldots,q^{n})p_{i}p_{j}+V(q^{1},\ldots,q^{n}).
\end{equation*}

\subsubsection{Scalar auxiliary lift}

Let
\[
A\in C^\infty(Q\times \mathbb R)
\]
be a positive smooth function. We define the scalar auxiliary lift on \(T^*(Q\times\mathbb R)\) by
\begin{equation*}
\label{eqlift2}
\widetilde H_{Q\times\mathbb R}(q,z,p,p_z)
=
\frac{1}{2}g^{ij}(q)p_i p_j
+
V(q)
+
A(q,z)p_z^2 ,
\end{equation*}
where \(z\), called the auxiliary variable, denotes the global coordinate on \(\mathbb R\). The function \(A(q,z)\) has a direct mechanical interpretation. It defines the kinetic geometry of the auxiliary direction and plays the role of a position-dependent inverse inertia for the added coordinate \(z\). Indeed, the auxiliary term may be written as
\[
A(q,z)p_z^2=\frac{p_z^2}{2m_{\mathrm{eff}}(q,z)},
\qquad
m_{\mathrm{eff}}(q,z)=\frac{1}{2A(q,z)}.
\]
Hence, \(A\) is not an additional potential for the original system but a warping factor of the extended configuration-space metric. Its dependence on \(q\) determines the coupling between the auxiliary direction and the physical variables, while its dependence on \(z\) controls the evolution of the auxiliary momentum. The assumption \(A>0\) guarantees positive kinetic energy in the added direction. On the invariant hypersurface \(p_z=0\), this auxiliary kinetic contribution vanishes, \(z\) is constant, and the projected dynamics coincides with the original Hamiltonian flow. Away from this hypersurface, \(A\) quantifies how the lifted dynamics departs from the original system.

Equivalently, this is the mechanical Hamiltonian associated with the metric
\[
\sigma
=
g_{ij}(q)dq^i dq^j+\frac{1}{2A(q,z)}dz^2.
\]
If one instead uses the metric coefficient \(1/A\), then the corresponding Hamiltonian contains the term \(\frac12 A p_z^2\); the two conventions differ only by a normalization of \(A\). We assume that the lift is globally defined.

The auxiliary equations are
\begin{equation*}
\dot z=2A(q,z)p_z,
\end{equation*}
and
\begin{equation*}
\dot p_z
=
-\frac{\partial A}{\partial z}(q,z)p_z^2.
\end{equation*}
Hence \(p_z\) is a particular integral since
\begin{equation*}
\dot p_z
=
\left(
-\frac{\partial A}{\partial z}(q,z)p_z
\right)p_z.
\end{equation*}
Moreover, the pair \(z,p_z\) forms a system of particular integrals in the sense of Definition \ref{particularintegrals} because
\[
\dot z=2A(q,z)p_z,
\qquad
\dot p_z=
-\frac{\partial A}{\partial z}(q,z)p_z^2.
\]
Therefore, the hypersurface
\[
M_z=\{p_z=0\}
\]
is dynamically invariant. On \(M_z\), one has \(\dot z=0\), and the equations for \((q^i,p_i)\) reduce to the Hamilton equations generated by \(H\). Thus the lifted system
\[
\left(T^*(Q\times\mathbb R),\omega_{Q\times\mathbb R},\widetilde H_{Q\times\mathbb R}\right)
\]
projects onto the original system
\[
(T^*Q,\omega_Q,H)
\]
after restriction to \(p_z=0\).

Let \(f\in C^\infty(T^*Q)\), regarded also as a function on \(T^*(Q\times\mathbb R)\) by pullback. Then
\begin{equation*}
\label{eq:scalarliftbracket}
\{f,\widetilde H_{Q\times\mathbb R}\}_{T^*(Q\times\mathbb R)}
=
\{f,H\}_{T^*Q}
-
\frac{\partial A}{\partial q^i}
\frac{\partial f}{\partial p_i}
p_z^2 .
\end{equation*}
Consequently, if \(f\) is a first integral of the original Hamiltonian system, then
\begin{equation*}
\{f,\widetilde H_{Q\times\mathbb R}\}_{T^*(Q\times\mathbb R)}
=
-
\frac{\partial A}{\partial q^i}
\frac{\partial f}{\partial p_i}
p_z^2 .
\end{equation*}
Thus \(f\) is conserved on the invariant hypersurface \(p_z=0\), but it is not necessarily a first integral of the full lifted system.

More precisely, for each regular value \(c\) of \(f\), the pair
\[
f-c,\qquad p_z
\]
is a system of particular integrals of the lifted system. Indeed,
\begin{equation*}
\frac{d}{dt}(f-c)
=
-
\frac{\partial A}{\partial q^i}
\frac{\partial f}{\partial p_i}
p_z^2
=
\left(
-
\frac{\partial A}{\partial q^i}
\frac{\partial f}{\partial p_i}
p_z
\right)p_z,
\end{equation*}
while
\[
\dot p_z
=
\left(
-\frac{\partial A}{\partial z}p_z
\right)p_z.
\]
Hence, the common zero set
\[
\{f=c,\ p_z=0\}
\]
is dynamically invariant.

In particular, if \(g(q,p)\) is a polynomial first integral of the original system, then \(g-c\) and \(p_z\) form a system of polynomial particular integrals of the scalar auxiliary lift. A general Eisenhart-type formula producing a modified polynomial first integral of \(\widetilde H_{Q\times\mathbb R}\) requires additional conditions on \(A\) and on the tensorial structure of \(g\); it is not automatic for the scalar auxiliary lift.

\begin{example}
\label{ex:scalarauxiliarylift}
Consider the one-dimensional harmonic oscillator
\[
H(q,p)=\frac{1}{2}p^2+\frac{1}{2}\omega^2q^2
\]
on $T^*\mathbb R$. Let
\[
A(q,z)=1+\varepsilon e^{-q^2}\cos z,
\qquad 0<\varepsilon<1.
\]
The bound on \(\varepsilon\) ensures \(A>0\), so the added kinetic term is positive. Mechanically,
\[
A(q,z)p_z^2
=
\frac{p_z^2}{2m_{\mathrm{eff}}(q,z)},
\qquad
m_{\mathrm{eff}}(q,z)=\frac{1}{2A(q,z)} .
\]
Thus \(A\) acts as a configuration-dependent inverse inertia for the auxiliary degree of freedom. The factor \(e^{-q^2}\) localizes the coupling near the oscillator center, while \(\cos z\) makes the auxiliary inertia periodic along the added direction. On the invariant sector \(p_z=0\), the auxiliary kinetic term vanishes, \(z\) is frozen, and the projected motion is exactly the original harmonic oscillator. Away from this sector, \(A\) controls the geometric departure of the lifted dynamics from the physical oscillator.
The scalar auxiliary lift is
\[
\widetilde H(q,z,p,p_z)
=
\frac{1}{2}p^2+\frac{1}{2}\omega^2q^2
+
A(q,z)p_z^2 .
\]
The lifted equations for the auxiliary variables are
\[
\dot z=2A(q,z)p_z,
\qquad
\dot p_z=-\frac{\partial A}{\partial z}p_z^2
=
\varepsilon e^{-q^2}\sin z\,p_z^2.
\]
Hence $p_z$ is a particular integral:
\[
\dot p_z=
\left(\varepsilon e^{-q^2}\sin z\,p_z\right)p_z.
\]
Therefore the hypersurface
\[
M_{p_z}=\{p_z=0\}
\]
is dynamically invariant. On this hypersurface, the lifted dynamics reduces to
\[
\dot q=p,
\qquad
\dot p=-\omega^2q,
\qquad
\dot z=0.
\]
Thus, after projection along the auxiliary coordinate $z$, one recovers the original harmonic oscillator.

Moreover, the original oscillator energy
\[
I(q,p)=\frac{1}{2}p^2+\frac{1}{2}\omega^2q^2
\]
is not generally conserved by the lifted dynamics. Indeed,
\[
\dot I
=
-\frac{\partial A}{\partial q}p\,p_z^2
=
2\varepsilon q p e^{-q^2}\cos z\,p_z^2.
\]
Thus $I$ is conserved on the invariant hypersurface $p_z=0$, but it is not a first integral of the full lifted system. More precisely, for each regular value $h$ of $I$, the pair
\[
I-h,\qquad p_z
\]
forms a system of particular integrals. Indeed,
\[
\frac{d}{dt}(I-h)
=
2\varepsilon qpe^{-q^2}\cos z\,p_z^2
=
\left(2\varepsilon qpe^{-q^2}\cos z\,p_z\right)p_z,
\]
while
\[
\dot p_z=
\left(\varepsilon e^{-q^2}\sin z\,p_z\right)p_z.
\]
Hence the common zero set
\[
\{I=h,\ p_z=0\}
\]
is dynamically invariant. This illustrates the basic effect of the scalar auxiliary lift: first integrals of the original Hamiltonian system become part of systems of particular integrals of the lifted system, while the original dynamics is recovered by restriction and projection.
\end{example}

\subsubsection{Diagonal auxiliary lift}

We now consider the diagonal multi-auxiliary version of the previous construction. Let
\[
A_1,\ldots,A_r\in C^\infty(Q\times\mathbb R^r)
\]
be positive smooth functions. We define a Riemannian metric on \(Q\times\mathbb R^r\) by
\begin{equation*}
d\sigma^2
=
g_{ij}(q)dq^idq^j
+
\sum_{s=1}^{r}
\frac{1}{2A_s(q,z)}(dz^s)^2.
\end{equation*}
Equivalently, the corresponding mechanical Hamiltonian on \(T^*(Q\times\mathbb R^r)\) is
\begin{equation*}
\widetilde H_{Q\times\mathbb R^r}(q,z,p,p_z)
=
\frac{1}{2}g^{ij}(q)p_ip_j
+
V(q)
+
\sum_{s=1}^{r}A_s(q,z)p_{z_s}^{2}.
\end{equation*}
Here
\[
(q,z,p,p_z)
=
(q^1,\ldots,q^n,z^1,\ldots,z^r,p_1,\ldots,p_n,p_{z_1},\ldots,p_{z_r})
\]
are canonical coordinates on \(T^*(Q\times\mathbb R^r)\).

The auxiliary equations are
\begin{equation*}
\dot z^s
=
2A_s(q,z)p_{z_s},
\qquad s=1,\ldots,r,
\end{equation*}
and
\begin{equation*}
\dot p_{z_s}
=
-
\sum_{l=1}^{r}
\frac{\partial A_l}{\partial z^s}(q,z)p_{z_l}^{2},
\qquad s=1,\ldots,r.
\end{equation*}
Therefore, the auxiliary momenta \(p_{z_1},\ldots,p_{z_r}\) form a system of particular integrals. Indeed,
\begin{equation*}
\dot p_{z_s}
=
b_s^{\,l}p_{z_l},
\qquad
b_s^{\,l}
=
-
\frac{\partial A_l}{\partial z^s}(q,z)p_{z_l},
\end{equation*}
with a summation over \(l\). Hence, the submanifold
\[
M_z=\{p_{z_1}=0,\ldots,p_{z_r}=0\}
\]
is dynamically invariant. On \(M_z\), one has
\[
\dot z^s=0,\qquad s=1,\ldots,r,
\]
and the equations for \((q^i,p_i)\) reduce to the Hamilton equations generated by
\[
H(q,p)=\frac{1}{2}g^{ij}(q)p_ip_j+V(q).
\]
Thus, the lifted system
\[
\left(T^*(Q\times\mathbb R^r),\omega_{Q\times\mathbb R^r},
\widetilde H_{Q\times\mathbb R^r}\right)
\]
projects onto the original system
\[
(T^*Q,\omega_Q,H)
\]
after restriction to \(M_z\).

Let \(f\in C^\infty(T^*Q)\) be regarded as a function on \(T^*(Q\times\mathbb R^r)\) by pullback. Then
\begin{equation*}
\{f,\widetilde H_{Q\times\mathbb R^r}\}_{T^*(Q\times\mathbb R^r)}
=
\{f,H\}_{T^*Q}
-
\sum_{s=1}^{r}
\frac{\partial A_s}{\partial q^j}
\frac{\partial f}{\partial p_j}
p_{z_s}^{2}.
\end{equation*}
Consequently, if \(f\) is a first integral of the original Hamiltonian system, then
\begin{equation*}
\{f,\widetilde H_{Q\times\mathbb R^r}\}_{T^*(Q\times\mathbb R^r)}
=
-
\sum_{s=1}^{r}
\frac{\partial A_s}{\partial q^j}
\frac{\partial f}{\partial p_j}
p_{z_s}^{2}.
\end{equation*}
Thus \(f\) is conserved on \(M_z\), but it is not necessarily a first integral of the full lifted system. More precisely, for each regular value \(c\) of \(f\), the family
\[
f-c,\qquad p_{z_1},\ldots,p_{z_r}
\]
is a system of particular integrals of the lifted system. Hence, the common zero set
\[
\{f=c,\ p_{z_1}=0,\ldots,p_{z_r}=0\}
\]
is dynamically invariant.

\subsubsection{Non-diagonal auxiliary lift}

We now allow the auxiliary metric to have non-diagonal components. Let
\[
(a_{sl}),\qquad s,l=1,\ldots,r,
\]
be a symmetric matrix (at each point) of smooth functions on \(Q\times\mathbb R^r\). We assume that it is globally invertible, and we denote its inverse by
\[
(a^{sl}).
\]
We define a semi-Riemannian metric on \(Q\times\mathbb R^r\) by
\begin{equation*}
d\sigma^2
=
g_{ij}(q)dq^idq^j
+
a_{sl}(q,z)dz^s dz^l.
\end{equation*}
The corresponding mechanical Hamiltonian on \(T^*(Q\times\mathbb R^r)\) is
\begin{equation*}
\widetilde H_{Q\times\mathbb R^r}(q,z,p,p_z)
=
\frac{1}{2}g^{ij}(q)p_ip_j
+
V(q)
+
\frac{1}{2}a^{sl}(q,z)p_{z_s}p_{z_l}.
\end{equation*}
As in the previous constructions, we assume that the lift is globally defined.

The auxiliary equations are
\begin{equation*}
\dot z^s
=
a^{sl}(q,z)p_{z_l},
\qquad s=1,\ldots,r,
\end{equation*}
and
\begin{equation*}
\dot p_{z_s}
=
-
\frac{1}{2}
\frac{\partial a^{kl}}{\partial z^s}(q,z)
p_{z_k}p_{z_l},
\qquad s=1,\ldots,r.
\end{equation*}
Therefore, the auxiliary momenta \(p_{z_1},\ldots,p_{z_r}\) form a system of particular integrals. Indeed,
\begin{equation*}
\dot p_{z_s}
=
b_s^{\,j}p_{z_j},
\qquad
b_s^{\,j}
=
-
\frac{1}{2}
\frac{\partial a^{jl}}{\partial z^s}(q,z)p_{z_l}.
\end{equation*}
Hence
\[
M_z=\{p_{z_1}=\cdots=p_{z_r}=0\}
\]
is dynamically invariant. On \(M_z\), one has
\[
\dot z^s=0,\qquad s=1,\ldots,r,
\]
and the equations for \((q^i,p_i)\) reduce to the Hamilton equations generated by
\[
H(q,p)=\frac{1}{2}g^{ij}(q)p_ip_j+V(q).
\]
Thus, the lifted system
\[
\left(T^*(Q\times\mathbb R^r),\omega_{Q\times\mathbb R^r},
\widetilde H_{Q\times\mathbb R^r}\right)
\]
projects onto the original Hamiltonian system
\[
(T^*Q,\omega_Q,H)
\]
after restriction to \(M_z\).

Let \(f\in C^\infty(T^*Q)\), again regarded as a function on \(T^*(Q\times\mathbb R^r)\) by pullback. Then
\begin{equation*}
\{f,\widetilde H_{Q\times\mathbb R^r}\}_{T^*(Q\times\mathbb R^r)}
=
\{f,H\}_{T^*Q}
-
\frac{1}{2}
\frac{\partial a^{sl}}{\partial q^j}
\frac{\partial f}{\partial p_j}
p_{z_s}p_{z_l}.
\end{equation*}
Consequently, if \(f\) is a first integral of the original Hamiltonian system, then
\begin{equation*}
\{f,\widetilde H_{Q\times\mathbb R^r}\}_{T^*(Q\times\mathbb R^r)}
=
-
\frac{1}{2}
\frac{\partial a^{sl}}{\partial q^j}
\frac{\partial f}{\partial p_j}
p_{z_s}p_{z_l}.
\end{equation*}
Thus \(f\) is conserved on \(M_z\), but it is not necessarily a first integral of the full lifted system. More precisely, for each regular value \(c\) of \(f\), the family
\[
f-c,\qquad p_{z_1},\ldots,p_{z_r}
\]
is a system of particular integrals of the lifted system. Hence, the common zero set
\[
\{f=c,\ p_{z_1}=0,\ldots,p_{z_r}=0\}
\]
is dynamically invariant.

\begin{example}
\label{ex:nondiagonalauxiliarylift}
Consider the two-dimensional isotropic oscillator
\[
H(x,y,p_x,p_y)
=
\frac{1}{2}(p_x^2+p_y^2)
+
\frac{\omega^2}{2}(x^2+y^2)
\]
on $T^*\mathbb R^2$. We add two auxiliary coordinates $z^1,z^2$ and consider the non-diagonal auxiliary lift
\[
\widetilde H
=
H
+
\frac{1}{2}a^{sl}(x,y,z)p_{z_s}p_{z_l},
\qquad s,l=1,2,
\]
where the inverse auxiliary metric is chosen as
\[
(a^{sl})
=
\begin{pmatrix}
1+\varepsilon e^{-x^2}\cos z^1
&
\delta e^{-(x^2+y^2)}\sin(z^1-z^2)
\\[2mm]
\delta e^{-(x^2+y^2)}\sin(z^1-z^2)
&
1+\varepsilon e^{-y^2}\cos z^2
\end{pmatrix}.
\]
For $0<\varepsilon+|\delta|<1$, this matrix is smooth, bounded, and nondegenerate. The lifted Hamiltonian is therefore
\[
\widetilde H
=
\frac{1}{2}(p_x^2+p_y^2)
+
\frac{\omega^2}{2}(x^2+y^2)
+
\frac{1}{2}a^{11}p_{z_1}^2
+
a^{12}p_{z_1}p_{z_2}
+
\frac{1}{2}a^{22}p_{z_2}^2 .
\]

The auxiliary momenta satisfy
\[
\dot p_{z_s}
=
-\frac{1}{2}
\frac{\partial a^{kl}}{\partial z^s}
p_{z_k}p_{z_l},
\qquad s=1,2.
\]
Equivalently,
\[
\dot p_{z_s}=b_s^{\,j}p_{z_j},
\qquad
b_s^{\,j}
=
-\frac{1}{2}
\frac{\partial a^{jl}}{\partial z^s}p_{z_l}.
\]
Thus, the set $\lbrace p_{z_1},p_{z_2}\rbrace$ forms a system of particular integrals. Consequently,
\[
M_z=\{p_{z_1}=0,p_{z_2}=0\}
\]
is dynamically invariant. On $M_z$, the auxiliary variables satisfy
\[
\dot z^s=a^{sl}p_{z_l}=0,
\]
and the remaining equations reduce exactly to
\[
\dot x=p_x,\qquad
\dot y=p_y,\qquad
\dot p_x=-\omega^2x,\qquad
\dot p_y=-\omega^2y.
\]
Hence, after projecting away the constant auxiliary coordinates $z^1,z^2$, one recovers the original oscillator.

The lift is nevertheless nontrivial. Away from $M_z$, the original oscillator energy is not conserved:
\[
\frac{dH}{dt}
=
-\frac{1}{2}
\left(
p_x\frac{\partial a^{sl}}{\partial x}
+
p_y\frac{\partial a^{sl}}{\partial y}
\right)
p_{z_s}p_{z_l}.
\]
Thus the original oscillator energy is conserved on the invariant submanifold $M_z$, but it is not generally a first integral of the full lifted system. More precisely, for each regular value $h$ of $H$, the family
\[
H-h,\qquad p_{z_1},\qquad p_{z_2}
\]
is a system of particular integrals. Indeed,
\[
\frac{d}{dt}(H-h)
=
-\frac{1}{2}
\left(
p_x\frac{\partial a^{sl}}{\partial x}
+
p_y\frac{\partial a^{sl}}{\partial y}
\right)
p_{z_s}p_{z_l},
\]
which belongs to the ideal generated by $p_{z_1}$ and $p_{z_2}$. Also,
\[
\dot p_{z_s}
=
-\frac{1}{2}
\frac{\partial a^{kl}}{\partial z^s}
p_{z_k}p_{z_l},
\qquad s=1,2,
\]
belongs to the same ideal. Hence the common zero set
\[
\{H=h,\ p_{z_1}=0,p_{z_2}=0\}
\]
is dynamically invariant.

This example illustrates the characteristic feature of the non-diagonal auxiliary lift: the auxiliary sector couples different auxiliary momenta, but the invariant constraint $p_{z_1}=p_{z_2}=0$ still projects the lifted dynamics onto the original Hamiltonian system.
\end{example}

\subsection{Natural Hamiltonian systems with magnetic interactions}

Magnetic interactions are encoded by vector potentials; see, for example, \cite{hoque2023family,marchesiello2015three}. To avoid confusion with the auxiliary functions used below, we denote the magnetic vector potential as
\[
\mathcal A=\mathcal A_i(q)dq^i .
\]
The Hamiltonian function for describing the system of a charged particle moving in a static electromagnetic field has the local form
\begin{equation}
\label{eq:emhamiltonian}
H_{\mathrm{m}}(q,p)
=
\frac{1}{2}g^{ij}(q)
\bigl(p_i-e\mathcal A_i(q)\bigr)
\bigl(p_j-e\mathcal A_j(q)\bigr)
+V(q),
\end{equation}
where \(e\in\mathbb R\) is the charge of the particle and \(V\) is the scalar potential.

The Eisenhart lift for Hamiltonians with vector potentials is given by \cite{cariglia2015eisenhart}
\begin{equation*}
\label{eq:emeisenhart}
\widetilde H_{\mathrm{E}}
=
\frac{1}{2}g^{ij}(q)
\bigl(p_i-\mathcal A_i(q)p_z\bigr)
\bigl(p_j-\mathcal A_j(q)p_z\bigr)
+
p_z^2V(q).
\end{equation*}
Restricting to a fixed level \(p_z=e\) recovers the magnetic Hamiltonian using the convention in \cite{cariglia2015eisenhart}
\[
H_{\mathrm{m}}
=
\frac{1}{2}g^{ij}(q)
\bigl(p_i-e\mathcal A_i(q)\bigr)
\bigl(p_j-e\mathcal A_j(q)\bigr)
+
e^2V(q).
\]
Equivalently, one may absorb the charge factor into the definition of the scalar potential and use the convention \eqref{eq:emhamiltonian}. As in the natural case without vector potentials, the Eisenhart lift maps polynomial constants of motion of the original system to polynomial constants of motion of the lifted geodesic system \cite{cariglia2014hidden}.

The auxiliary-lift construction of Section \ref{secother} extends directly to this setting. Let
\[
\Lambda\in C^\infty(Q\times\mathbb R)
\]
be a positive smooth function. We define
\begin{equation}
\label{eq:emauxlift}
\widetilde H_{\Lambda}(q,z,p,p_z)
=
\frac{1}{2}g^{ij}(q)
\bigl(p_i-e\mathcal A_i(q)\bigr)
\bigl(p_j-e\mathcal A_j(q)\bigr)
+
V(q)
+
\Lambda(q,z)p_z^2 .
\end{equation}
Then
\[
\dot z=2\Lambda(q,z)p_z,
\qquad
\dot p_z=-\frac{\partial \Lambda}{\partial z}(q,z)p_z^2.
\]
Thus \(p_z\) is a particular integral, and the hypersurface
\[
M_z=\{p_z=0\}
\]
is dynamically invariant. On \(M_z\), the equations for \((q^i,p_i)\) reduce to the Hamilton equations generated by \(H_{\mathrm{m}}\). Therefore, the auxiliary lift \eqref{eq:emauxlift} recovers the original electromagnetic Hamiltonian dynamics by restricting to \(p_z=0\) and projecting along \(z\).

Moreover, if $F(q,p)$ is a first integral of $H_{\mathrm{m}}$, then
\begin{equation*}
\{F, \widetilde{H}_\Lambda\} = \{F, \Lambda\}\, p_z^2
= \bigl(\{F, \Lambda\}\, p_z\bigr)\, p_z,
\end{equation*}
which always belongs to the ideal generated by $p_z$. Consequently, for each
regular value $c$ of $F$, the pair
\begin{equation*}
F - c, \qquad p_z
\end{equation*}
forms a system of particular integrals of the lifted dynamics.

\begin{example}
\label{ex:emauxiliarylift}
Consider a charged particle moving in the plane under a constant magnetic field \(B\) perpendicular to the plane and an isotropic harmonic potential. In the symmetric gauge,
\[
\mathcal A_x=-\frac{B}{2}y,
\qquad
\mathcal A_y=\frac{B}{2}x,
\]
the magnetic Hamiltonian is
\[
H_{\mathrm{m}}(x,y,p_x,p_y)
=
\frac{1}{2}
\left(p_x+\frac{eB}{2}y\right)^2
+
\frac{1}{2}
\left(p_y-\frac{eB}{2}x\right)^2
+
\frac{\omega^2}{2}(x^2+y^2).
\]
Introduce one auxiliary coordinate \(z\) and define
\[
\widetilde H_{\Lambda}
=
H_{\mathrm{m}}
+
\Lambda(x,y,z)p_z^2,
\]
where
\[
\Lambda(x,y,z)
=
1+\varepsilon e^{-(x^2+y^2)}
\left(
\cos z+\frac{x}{1+x^2}
\right),
\qquad
0<\varepsilon<\frac{1}{3}.
\]
Then \(\Lambda\) is smooth, bounded, and nowhere vanishing. The auxiliary equations are
\[
\dot z=2\Lambda(x,y,z)p_z,
\qquad
\dot p_z
=
-\frac{\partial \Lambda}{\partial z}p_z^2
=
\varepsilon e^{-(x^2+y^2)}\sin z\,p_z^2.
\]
Hence \(p_z\) is a particular integral:
\[
\dot p_z
=
\left(
\varepsilon e^{-(x^2+y^2)}\sin z\,p_z
\right)p_z.
\]
Therefore
\[
M_z=\{p_z=0\}
\]
is dynamically invariant. On \(M_z\), one has \(\dot z=0\), and the equations for
\[
(x,y,p_x,p_y)
\]
reduce exactly to the Hamilton equations generated by \(H_{\mathrm{m}}\). Thus, the original electromagnetic Hamiltonian system is recovered by restricting to \(p_z=0\) and projecting away the auxiliary coordinate \(z\).

The lift is not dynamically trivial. Away from \(M_z\), the original electromagnetic energy is not conserved:
\[
\frac{dH_{\mathrm{m}}}{dt}
=
-\left[
\left(p_x+\frac{eB}{2}y\right)\frac{\partial \Lambda}{\partial x}
+
\left(p_y-\frac{eB}{2}x\right)\frac{\partial \Lambda}{\partial y}
\right]p_z^2.
\]
Thus, \(H_{\mathrm{m}}\) is conserved on the invariant hypersurface \(p_z=0\), but it is not generally a first integral of the full lifted system. More precisely, for each regular value \(h\) of \(H_{\mathrm{m}}\), the pair
\[
H_{\mathrm{m}}-h,\qquad p_z
\]
forms a system of particular integrals. Indeed,
\[
\frac{d}{dt}(H_{\mathrm{m}}-h)
=
-\left[
\left(p_x+\frac{eB}{2}y\right)\frac{\partial \Lambda}{\partial x}
+
\left(p_y-\frac{eB}{2}x\right)\frac{\partial \Lambda}{\partial y}
\right]p_z^2,
\]
which belongs to the ideal generated by \(p_z\), and
\[
\dot p_z
=
\left(
\varepsilon e^{-(x^2+y^2)}\sin z\,p_z
\right)p_z.
\]
Hence
\[
\{H_{\mathrm{m}}=h,\ p_z=0\}
\]
is dynamically invariant.

Moreover, the original planar system is rotationally symmetric, whereas the term
\[
\frac{x}{1+x^2}
\]
in \(\Lambda\) breaks this symmetry in the lifted dynamics. Consequently, the lift is genuinely coupled to the physical degrees of freedom, while still projecting to the original electromagnetic system on the invariant hypersurface. This illustrates that the auxiliary-lift construction extends naturally to Hamiltonian systems with magnetic vector potentials.
\end{example}

\subsection{Polynomial Hamiltonians}

The auxiliary-lift constructions above are not restricted to natural Hamiltonians. The same mechanism can be applied to polynomial Hamiltonians of the form
\[
H(q,p)=\sum_{r=0}^{m}H_r(q,p),
\]
where each \(H_r\) is homogeneous of degree \(r\) in the momenta. Given auxiliary coordinates \(z^1,\ldots,z^s\), consider a lifted Hamiltonian
\[
\widetilde H(q,z,p,p_z)
=
H(q,p)+R(q,z,p_z),
\]
where \(R\) is polynomial in the auxiliary momenta and satisfies
\[
R(q,z,0)=0.
\]
Since $R$ is polynomial in the auxiliary momenta and vanishes when they do,
it can be written as
\begin{equation*}
R(q,z,p_z) = R^i(q,z,p_z)\, p_{z_i},
\end{equation*}
for suitable smooth functions $R^i$, i.e., $R$ belongs to the ideal generated by
$p_{z_1}, \ldots, p_{z_s}$. The same is then true of the partial derivatives
$\partial R/\partial q^j$ and $\partial R/\partial z^i$. Consequently, the
equations for the auxiliary momenta \emph{automatically} close on this ideal:
\begin{equation*}
\dot{p}_{z_i} = \{p_{z_i}, \widetilde{H}\}
= -\frac{\partial R}{\partial z^i}
= b_i^j\, p_{z_j},
\qquad
b_i^j = -\frac{\partial R^j}{\partial z^i},
\qquad i = 1, \ldots, s.
\end{equation*}
Then
\[
p_{z_1},\ldots,p_{z_s}
\]
form a system of particular integrals, and the constraint submanifold
\[
M_z=\{p_{z_1}=0,\ldots,p_{z_s}=0\}
\]
is dynamically invariant. Since \(R(q,z,0)=0\), the restricted dynamics on \(M_z\) projects to the original Hamiltonian dynamics generated by \(H\).

Let \(I(q,p)\) be a polynomial first integral of the original Hamiltonian system. On the lifted phase space,
\[
\{I,\widetilde H\}
=
\{I,H\}+\{I,R\}
=
\{I,R\}.
\]
Thus $I$ need not be a first integral of the full lifted system. However,
since $I$ does not depend on $(z, p_z)$ and $R$ does not depend on $p$,
\begin{equation*}
\{I, R\}
= -\frac{\partial I}{\partial p_j}\,\frac{\partial R}{\partial q^j}
= \left(-\frac{\partial I}{\partial p_j}\,\frac{\partial R^i}{\partial q^j}\right) p_{z_i},
\end{equation*}
which always belongs to the ideal generated by the auxiliary momenta.
Therefore, for each regular value $c$ of $I$, the family
\begin{equation*}
I - c, \qquad p_{z_1}, \ldots, p_{z_s}
\end{equation*}
forms a system of particular integrals of the lifted dynamics, and the common
zero set
\begin{equation*}
\{\, I = c,\ p_{z_1} = 0, \ldots, p_{z_s} = 0 \,\}
\end{equation*}
is dynamically invariant.

Thus, auxiliary lifts of this type always transform polynomial first
integrals of the original system into components of systems of polynomial particular integrals of the lifted system.

\section{Conclusions}

We have introduced a geometric reduction mechanism generated by systems of particular integrals. The starting point is the extension of the scalar condition
\[
\dot f=af
\]
to families of functions satisfying
\[
\dot f_i=a_i^j f_j .
\]
This formulation shows that invariant submanifolds need not be defined by constants of motion, nor by scalar particular integrals alone. They may instead arise as common zero sets of functions whose evolution closes linearly on the family itself. This provides a natural mechanism for restricting a dynamical system to lower-dimensional invariant subsets.

In the Hamiltonian setting, involutive systems of particular integrals carry additional geometric structure. Their common zero set inherits a presymplectic form from the ambient symplectic manifold, and the restricted dynamics projects to a genuine lower-dimensional Hamiltonian system on the quotient manifold by the characteristic distribution of the presymplectic form. This is the central mechanism of the paper: particular integrals define invariant submanifolds, while involutivity turns the restricted dynamics into presymplectic dynamics whose quotient is Hamiltonian.

This framework naturally leads to the notion of particular Liouville integrability. Unlike particular Lie integrability, where the restricted equations need not be Hamiltonian, particular Liouville integrability applies when the invariant restricted projected dynamics is completely integrable in the Liouville sense. Thus, the standard consequences of Liouville integrability, including quasi-periodic motion on invariant tori under the usual compactness and regularity hypotheses, occur not on the full phase space but on the reduced phase space selected by the particular integrals.

We also showed that auxiliary lifts provide a broad source of such systems. In the scalar, diagonal, and non-diagonal auxiliary lifts, the auxiliary momenta define invariant submanifolds on which the lifted dynamics projects to the original Hamiltonian dynamics. Away from these invariant submanifolds, the lifted systems are genuinely coupled, and first integrals of the original dynamics generally cease to be global first integrals of the lifted dynamics, which, together with the auxiliary momenta, form systems of particular integrals. The same mechanism extends to Hamiltonian systems with magnetic vector potentials and to polynomial Hamiltonians.

Several directions remain open. A first one is the classification of auxiliary lifts that transform first integrals into systems of particular integrals. A second direction is the extension of the construction to Poisson, cosymplectic, contact, and locally conformal symplectic systems \cite{deLeon_2024}, and the generalization of these results to time-dependent Hamiltonian systems \cite{ALBERT1989627,GUTIERREZSAGREDO2025105492}. These settings should reveal further ways in which non-global conservation laws organize restricted Hamiltonian dynamics.

\section*{Acknowledgements}

The research of R. Azuaje is supported by the European Union and the Czech Ministry of Education under project CZ.02.01.01/00/22\_011/0008569 "Czech Technical University - International Postdoc Programme CROP".

A. M. Escobar Ruiz would like to acknowledge support from UAM research grant CBI-SA-391-26 PAPDI 2026.

I. Gutierrez-Sagredo acknowledges partial support from the grants PID2023-148373NB-I00 funded by MCIN/AEI/ 10.13039/501100011033/FEDER–UE and BU011P25 funded by Junta de Castilla y León (Spain).

R. Azuaje and I. Gutierrez-Sagredo thank Francisco J. Herranz for valuable discussions at the initial stages of this research. Additionally, R. Azuaje thanks Libor \v{S}nobl for helpful comments on the lifts for Hamiltonian systems with magnetic fields.

\bibliography{refs} 
\bibliographystyle{unsrt} 

\end{document}